\documentclass[letterpaper, 10 pt,conference]{ieeeconf}  % Comment this line out if you need a4paper

\IEEEoverridecommandlockouts                              % This command is only needed if 
% you want to use the \thanks command

%\overrideIEEEmargins                                      % Needed to meet printer requirements.

% See the \addtolength command later in the file to balance the column lengths
% on the last page of the document

% The following packages can be found on http:\\www.ctan.org
\usepackage{graphicx} % for pdf, bitmapped graphics files
\usepackage{epsfig} % for postscript graphics files
\usepackage{cite} % allows to put citations like [1]-[5]
\usepackage{amssymb,amsmath,graphicx}
\usepackage{mathrsfs}
\usepackage{mathtools}
\usepackage{color}
\usepackage{bm}
\usepackage{nicefrac}
% \usepackage[linesnumbered, ruled, vlined]{algorithm2e}
% the next two lines only number referenced equations
%\usepackage{mathtools}
%\mathtoolsset{showonlyrefs}
% save and then undefine the conflicting command
% we need \makeatletter because \@undefined uses the special @ character.
%\makeatletter
                         % make \IEEEproof do same as \proof
              % make \IEEEendproof do same as \endproof
%\let\proof\@undefined                        % undefine \proof
%\let\endproof\@undefined                  % undefine \endproof
%\makeatother​

\usepackage{amsthm}
\usepackage{algorithm, algpseudocode}  
\graphicspath{{.}}

\theoremstyle{plain}

\usepackage{etoolbox}
\usepackage{float}
\usepackage{tabularx}

\def\th@definition{
	\thm@headfont{\itshape} % Heading font is italic
	\thm@notefont{} % Note is same as heading
	%	\itshape% Regular text is also italic
}
\makeatother

\theoremstyle{definition}

\newtheorem{dfn}{Definition}
\newtheorem{lemma}{Lemma}

\newtheorem{problem}{Problem}
\newtheorem{theorem}{Theorem}
\newtheorem{remark}{Remark}

\newtheorem{proposition}{Proposition}

\newcommand{\barsig}{\overline{\Sigma}}
\newcommand{\tell}{\tilde{\ell}}

\newcommand{\adj}{\textnormal{Adj}}

\makeatletter
\patchcmd{\@makecaption}
{\scshape}
{}
{}
{}
\makeatother

% less space between figures
\setlength{\textfloatsep}{10pt}

\newcommand{\norm}[1]{\left\lVert#1\right\rVert}

\let\oldIEEEkeywords\IEEEkeywords
\def\IEEEkeywords{\oldIEEEkeywords\normalfont\bfseries\ignorespaces}
\hyphenation{op-tical net-works semi-conduc-tor}
\newcommand{\argmin}{\mathrm{arg}\min}

\begin{document}
	
\title{Differentially Private Controller Synthesis With Metric Temporal Logic Specifications}           
                                 
	\author{Zhe~Xu, Kasra Yazdani, Matthew T. Hale and Ufuk Topcu
	\thanks{Zhe~Xu is with the Oden Institute
		for Computational Engineering and Sciences, University of Texas,
		Austin, Austin, TX 78712, Kasra Yazdani and Matthew T. Hale are with the Department of Mechanical
		and Aerospace Engineering, University of Florida, Gainesville, Florida 32611, Ufuk Topcu is with the Department
		of Aerospace Engineering and Engineering Mechanics, and the Oden Institute
		for Computational Engineering and Sciences, University of Texas,
		Austin, Austin, TX 78712, e-mail: zhexu@utexas.edu, kasra.yazdani@ufl.edu, matthewhale@ufl.edu, utopcu@utexas.edu.}     
}

\maketitle                 

\begin{abstract} 
    Privacy is an important concern in various multi-agent systems in which data collected from the agents are sensitive. We propose a \textit{differentially private} controller synthesis approach for multi-agent systems subject to high-level specifications expressed in \textit{metric temporal logic} (MTL). We consider a setting where each agent sends data to a \textit{cloud} (computing station) through a set of \textit{local hubs} and the cloud is responsible  for computing the control inputs of the agents. Specifically, each agent
    adds \textit{privacy noise} (e.g., Gaussian noise) point-wise in time to its own outputs before sharing them with a
   \textit{local hub}. Each local hub runs a Kalman filter to estimate the state of the corresponding agent and periodically sends such state estimates to the cloud. The cloud computes the optimal inputs for each agent subject to an MTL specification. While guaranteeing differential privacy of each agent, the controller is also synthesized to ensure a probabilistic guarantee for satisfying the MTL specification. We provide an implementation of the proposed method on a simulation case study with two Baxter-On-Wheels robots as the agents.
\end{abstract}       
 
\section{Introduction}           
\label{sec:intro}
	Along with the rapid development of multi-agent systems (MAS) and cloud computing technologies, protecting the privacy of collected data has been a major concern \cite{zhe2019privacy}. While the detailed available data from the agents in an MAS helps in decision-making and control, the resolution of the shared data triggers the possibility of breaching the privacy of the agents. For example, while smart transportation systems rely on precise measurements of locations of vehicles, the shared information can be sensitive as it may reveal traces of movements of vehicles.
	% in deployment of such systems Privacy is an important concern in many distributed systems in which data collected from entities is sensitive.  For example, while precise measurements for power consumption of users of the power gird provides certain services to the smart grid, the shared information can reveal various activities or even daily schedule of a consumer \cite{molina-Markham2010}. 
	%	Motivated by the threats on privacy, the existing literature provides methods such as $k$-anonymity \cite{sweeney2002}, homeomorphic encryption, information theoretic privacy \cite{sankar2013}, and differential privacy to preserve privacy \cite{Dwork2014algorithmic}.

	Differential privacy constitutes a strong standard for protecting the privacy of agents while allowing for general statistical analyses on aggregate data \cite{Dwork2014algorithmic}. Differential privacy was originally developed for static data and it provides several important properties such as resilience to post-processing \cite{Dwork2014algorithmic}. Besides, privacy guarantees of differential privacy hold against adversaries with auxiliary information that could potentially be linked with sensitive data of agents. 
	
    More recently, the guarantees of differential privacy have been extended to dynamical systems in which trajectory-valued data are protected \cite{LeNy2014}. Differential privacy for trajectory-valued data is achieved by adding \textit{privacy noise} (e.g., Gaussian noise) to sensitive trajectories in such a way that it is provably unlikely for an adversary to infer the privatized trajectory. 
%    Formally, each agent’s state trajectory will
%    be made approximately indistinguishable from other nearby
%    state trajectories which the same agent individually could
%    have produced.

\begin{figure}[!t]
	\centering
	\includegraphics[scale=0.2]{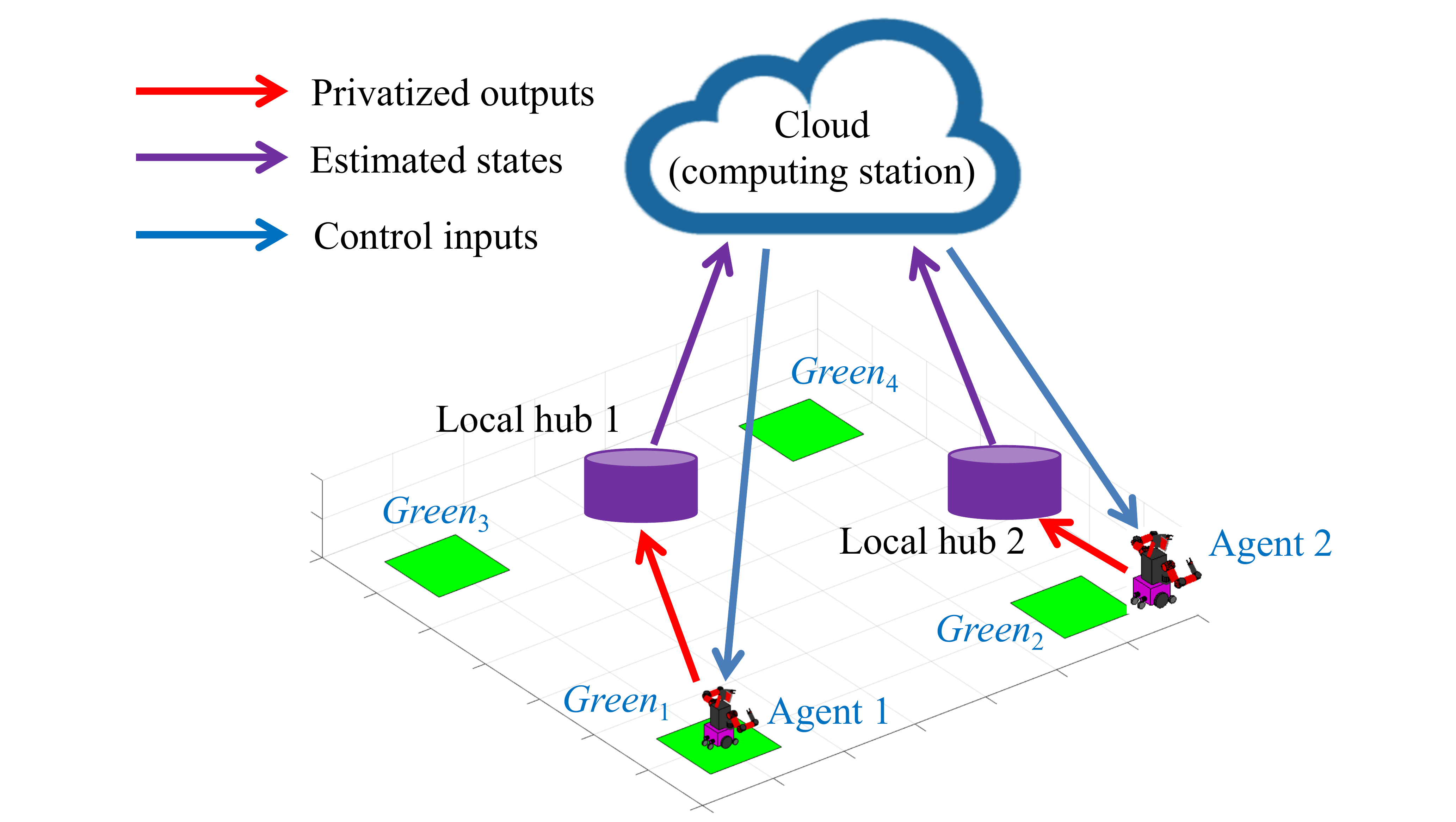}
	\caption{Diagram of information transmission among the agents, local hubs and the cloud.} 
	\label{fig_DF}
\end{figure}

	In this paper, we propose a controller synthesis approach for MAS that combines differential privacy with \textit{cloud-based} control subject to high-level specifications. In Fig.~\ref{fig_DF}, we provide an illustration of the setting that we consider in this paper. The two agents are supply robots transporting goods to warehouses which are marked by the green regions. Each agent reports its location with added privacy noise to their corresponding \textit{local hub}. Each local hub runs a Kalman filter to estimate the corresponding agent's location and periodically transmits the estimate to a cloud for decision-making. The cloud computes control inputs based on the state estimates and sends the control inputs to the agents.

    We express the high-level task specifications in \textit{metric temporal logic} (MTL), which has been used in many robotic applications \cite{Zhou2015OptimalMP, zhe_ijcai2019}. As in the  example in Fig. \ref{fig_DF},  MTL can express task specifications such as \textit{``Agent 1 should reach both $Green_1$ and $Green_4$ at least once in every consecutive 10 time units. Agent 2 should reach both $Green_2$ and $Green_3$ at least once in every consecutive 10 time units. The two agents should never collide with each other''}. 
    
    % \textcolor{blue}{[Why? Maybe the cloud knows what the agents are and synthesized a policy a priori? If we can do that, why share states?]}
    
    We model the dynamics of each agent as a stochastic control system and, it also has a \textit{nominal} deterministic control system. The Kalman filter in the local hubs can estimate the states of the stochastic control system and probabilistic bounds 
    on the states at the time instants when the cloud receives data from the local hubs. The cloud synthesizes the control inputs such that the trajectories of the \textit{nominal} deterministic control system satisfy the MTL specification with certain robustness margins. Then, utilizing a \textit{stochastic control bisimulation function} \cite{zheACC2018}, one can bound the
	divergence of the trajectories of a stochastic control system and its \textit{nominal} deterministic control system in a probabilistic
	fashion. In this way, the cloud can apply the synthesized control inputs (from the nominal deterministic control system) to the stochastic control system with a probabilistic guarantee for satisfying the MTL specification. 

	We provide an implementation of the proposed method on a simulation case study with two Baxter-On-Wheels robots as the agents. The results show that the synthesized controller can lead to satisfaction of the MTL specifications with a probabilistic guarantee.

\section{Preliminaries}  

\subsection{Stochastic Control Bisimulation Function}	
We consider a multi-agent system (MAS) consisting of $N$ agents. Let the time set be $\mathbb{T} = \mathbb{R}_{\ge0}$. For each agent $i$, we consider the stochastic control system with linear dynamics as below:
\begin{align}
\begin{split}
dx^i=&F^i(x^i,u^i)dt+G^i(x^i,u^i)dw
\\&=(A^ix^i+B^iu^i)dt+\Upsilon^i dw, 
\end{split}
\label{syslinear}
\end{align}
where $x^i:\mathbb{T}\rightarrow\mathcal{X}^i\subset\mathbb{R}^{n^i}$ ($n^i\in\mathbb{Z}_{>0}$) and $u^i:\mathbb{T}\rightarrow\mathbb{R}^{m^i}$ ($m^i\in\mathbb{Z}_{>0}$) are the state and input of the stochastic control system for agent $i$, $w$ is an $\mathbb{R}^{p^i}$-valued standard Brownian motion ($p^i\in\mathbb{Z}_{>0}$), and $A^i\in\mathbb{R}^{n^i\times n^i}$, $B^i\in\mathbb{R}^{n^i\times m^i}$ and $\Upsilon^i\in\mathbb{R}^{n^i\times p^i}$ are constant matrices.

We also consider the \textit{nominal control system} of (\ref{syslinear}) as the diffusionless deterministic version:
\begin{align}
&dx^{\ast i}=F(x^{\ast i},u^i)dt=(A^ix^{\ast i}+B^iu^i)dt,
\label{nom}
\end{align}
where $x^{\ast i}:\mathbb{T}\rightarrow\mathcal{X}^i\subset\mathbb{R}^{n^i}$ is the state of the nominal control system for agent $i$.

To bound the
divergence of the trajectories of a stochastic control system and its nominal control system, the stochastic control bisimulation function is introduced in \cite{zheACC2018}.

\begin{dfn}	\label{dfn:bisimulation}
	A twice differentiable function $\phi^i$ : $\mathcal{X}^i\times\mathcal{X}^i\rightarrow \mathbb{R}_{\geqslant 0}$ is a \textit{stochastic control bisimulation function} between (\ref{syslinear}) and its nominal system (\ref{nom}) if it satisfies
	\begin{align}
	\begin{split}
	&\phi^i(x^i,\tilde{x}^i)>0, \forall x^i,\tilde{x}^i\in \mathcal{X}^i,  x^i\neq\tilde{x}^i, \\
	&\phi^i(x^i, x^i)=0,\ \forall x^i\in \mathcal{X}^i,  
	\end{split}
	\end{align}
	and there exist $\mu^i, \alpha^i>0$ and a control input
	$u^i:\mathbb{T}\rightarrow\mathbb{R}^{m^i}$ such that 
	\begin{align}
	\begin{split}
	&\frac{\partial\phi^i}{\partial x^i}F^i(x^i,u^i)+\frac{\partial\phi^i}{\partial \tilde{x}^i}F^i(\tilde{x}^i,u^i)\\&+\frac{1}{2}{G^{i}}^T(x^i,u^i)\frac{\partial^{2}\phi^i}{\partial x^{i2}}G^i(x^i,u^i)\leq-\mu^i\phi^i+\alpha^i, 
	\end{split}
	\label{bisimulation_ineq}
	\end{align}
	for any $x^i, \tilde{x}^i\in\mathcal{X}^i$. 
\end{dfn}	

%\textcolor{blue}{[Can we add a sentence on what this definition says? Also, can we mention what $x$ and $x^*$ and $\tilde{x}$ each represent?]}

If the system is stable, i.e. $A^i$ is Hurwitz, we can construct a stochastic control bisimulation function of the form
\begin{center}
	$\phi^i(x^i, \tilde{x}^i)=(x^i-\tilde{x}^i)^{T}M^i(x^i-\tilde{x}^i)$, 
\end{center}
where $M^i$ is a symmetric positive definite matrix.

Based on \cite{zheACC2018}, if we pick $\alpha^i=\textnormal{tr}({\Upsilon^{i}}^T M^i\Upsilon^i)$, the inequality (\ref{bisimulation_ineq}) becomes a linear matrix inequality (LMI)
\begin{align}
& {A^{i}}^T M^i+M^iA^i+\mu^i M^i\preceq 0. 
\label{LMI} 
\end{align}
We denote agent $i$'s trajectory starting from $x^i_0$ with the input signal $u^i(\cdot)$ as $\xi^i_{\bm\cdot;x^i_0,u^i}$. Equation (\ref{bisimulation_ineq}) holds for any input signal $u^i(\cdot)$, so $u^i(\cdot)$ is free to be designed. It can also be seen that the matrix $M^i$ that satisfies (\ref{LMI}) also satisfies:
\begin{align}
&{A^{i}}^TM^i+M^iA^i\preceq 0. 
\label{LMI2} 
\end{align}
Thus it can be verified that $\psi(x^i, \tilde{x}^i)=\phi^i(x^i, \tilde{x}^i)=(x^i-\tilde{x}^i)^{T}M^i(x^i-\tilde{x}^i)$ is also a control bisimulation function (see Definition 2 of \cite{zhe_control}) of
the nominal system  
\begin{align}
\begin{split}
&dx^{\ast i}=(A^ix^{\ast i}+B^iu^i)dt. 
\end{split}
\label{nomlinear}
\end{align}
We denote the nominal system trajectory starting from $x^i_0$ with the input signal $u^i(\bm\cdot)$ as $\xi^{\ast i}_{{\bm\cdot};x^i_0,u^i}$. 

\begin{remark}	
	If the system dynamics is not stable but stabilizable, we can introduce another input signal $\zeta^i(\cdot)$ such that $u^i=K^ix^i+\zeta^i$,  where $K^i$ is chosen such that $A^i+B^iK^i$ is Hurwitz, and the above properties still hold by replacing (\ref{LMI}) with $(A^i+B^iK^i)^{T}M^i+M^i(A^i+B^iK^i)+\mu^i M^i\preceq 0$.
\end{remark}
%\textcolor{blue}{should this be $\zeta^i(x^i)$?}
\begin{proposition}
	If $\phi$ is a stochastic control bisimulation function between the stochastic system (\ref{syslinear}) and its nominal system (\ref{nomlinear}), then for any $t>0$ and $\eta^i\in[0,1)$,
	\begin{align}
	& P\left\lbrace \sup_{0\leq t'\leq t}\phi(\xi^{\ast i}_{t';x^i_0,u^i}, \xi^i_{t';x^i_0,u^i})<\frac{\alpha t}{1-\eta^i}\right\rbrace > \eta^i.  
	\label{prob}
	\end{align}
\end{proposition}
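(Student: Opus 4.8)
The plan is to apply a supermartingale-type argument to the process $\phi(\xi^{\ast i}_{t';x^i_0,u^i}, \xi^i_{t';x^i_0,u^i})$, combined with Dynkin's formula (or It\^o's formula in integral form) and a maximal inequality. First I would introduce the error process $e_{t'} = \xi^{\ast i}_{t';x^i_0,u^i} - \xi^i_{t';x^i_0,u^i}$, which starts at $e_0 = 0$ since both trajectories share the same initial condition $x^i_0$ and the same input signal $u^i(\cdot)$. Writing $V_{t'} = \phi(\xi^{\ast i}_{t'}, \xi^i_{t'})$ and applying It\^o's formula to $V_{t'}$, the drift term is exactly the left-hand side of the bisimulation inequality~(\ref{bisimulation_ineq}) evaluated along the trajectories (here using that the deterministic nominal trajectory contributes the $\frac{\partial \phi}{\partial x^i}F^i$ term with no diffusion, and the stochastic trajectory contributes both a drift and the second-order It\^o correction $\frac{1}{2}{G^i}^T \frac{\partial^2 \phi}{\partial x^{i2}} G^i$), so by~(\ref{bisimulation_ineq}) the drift is bounded above by $-\mu^i V_{t'} + \alpha^i \le \alpha^i$. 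Hence $\E[V_{t'}] \le \alpha^i t'$ for all $t' \ge 0$, and more importantly $M_{t'} := V_{t'} - \int_0^{t'}(\text{drift})\,ds$ is a martingale, so $V_{t'} - \alpha^i t'$ (dropping the favorable $-\mu^i V$ term) is dominated by a supermartingale starting at $0$.

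The key step is then a maximal inequality. I would consider the nonnegative process $N_{t'} := V_{t'} - \int_0^{t'} (\text{drift}_s)\, ds \ge 0$ shifted appropriately, or more directly bound $\sup_{0\le t'\le t} V_{t'}$. Since $V_{t'} = M_{t'} + \int_0^{t'}\text{drift}_s\, ds \le M_{t'} + \alpha^i t$ on $[0,t]$ with $M$ a mean-zero martingale and $M_0 = V_0 = 0$, we get $\sup_{0\le t'\le t} V_{t'} \le \sup_{0\le t'\le t} M_{t'} + \alpha^i t$. Applying a maximal inequality to the martingale $M$ (or directly to the nonnegative supermartingale $\alpha^i t' - \int_0^{t'}\text{drift}_s ds + \text{something} \ge V_{t'}$ — the cleanest route is Doob/Ville's inequality for the nonnegative supermartingale obtained from $V_{t'}$ minus its compensator), one obtains a bound of the form $P\{\sup_{0\le t'\le t} V_{t'} \ge \lambda\} \le \frac{\E[V_t \text{ or compensator}]}{\lambda} \le \frac{\alpha^i t}{\lambda}$. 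Setting $\lambda = \frac{\alpha^i t}{1-\eta^i}$ gives $P\{\sup_{0\le t'\le t} V_{t'} \ge \frac{\alpha^i t}{1-\eta^i}\} \le 1-\eta^i$, and taking complements yields~(\ref{prob}).

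The main obstacle I anticipate is getting the maximal inequality to produce exactly the constant $\frac{\alpha t}{1-\eta^i}$ with the strict inequalities as stated, which requires setting up the right nonnegative supermartingale. The natural candidate is $Y_{t'} := V_{t'} + \mu^i\int_0^{t'} V_s\, ds - \alpha^i t'$, which by~(\ref{bisimulation_ineq}) has nonpositive drift, hence is a supermartingale with $Y_0 = 0$; then $V_{t'} \le Y_{t'} + \alpha^i t'$ (using $V_s \ge 0$), and applying Ville's maximal inequality to the nonnegative supermartingale $(Y_{t'} + \alpha^i t')_{t'\ge 0}$ — or rather to $\sup_{0\le t'\le t}(Y_{t'}+\alpha^i t') \le \alpha^i t + \sup Y_{t'}$ — with $\E[Y_{t'}+\alpha^i t'] \le \alpha^i t'$ gives the claimed bound. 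A secondary technical point is justifying the local-martingale-to-martingale upgrade (e.g., via a localization/stopping-time argument and monotone convergence), which is routine given the quadratic form of $\phi$ and standard integrability of linear SDEs, so I would state it and defer the details. The strictness of the inequalities in the statement follows from the strictness $\eta^i < 1$ and handling the boundary case $\eta^i = 0$ trivially.
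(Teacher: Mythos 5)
The paper does not actually prove this proposition from scratch: its ``proof'' is a one-line appeal to Proposition 2.2 of \cite{Julius2008CDC} combined with (\ref{bisimulation_ineq}). What you are doing is reconstructing the proof of that cited result, and your overall strategy is the right one --- It\^o/Dynkin applied to $V_{t'}=\phi(\xi^{\ast i}_{t'},\xi^i_{t'})$ with $V_0=0$, the drift bounded by $-\mu^i V_{t'}+\alpha^i\le\alpha^i$ via (\ref{bisimulation_ineq}), and then a maximal inequality of the form $P\bigl\{\sup_{0\le t'\le t}V_{t'}\ge\lambda\bigr\}\le \alpha^i t/\lambda$, which with $\lambda=\alpha^i t/(1-\eta^i)$ gives (\ref{prob}). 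This is exactly the Kushner-style supermartingale argument underlying the citation, so in spirit your route coincides with the paper's, while being considerably more self-contained.

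However, the step you yourself flag as delicate is where your construction, as written, does not close. The process $Y_{t'}+\alpha^i t'=V_{t'}+\mu^i\int_0^{t'}V_s\,ds$ is nonnegative but is \emph{not} a supermartingale --- adding back the increasing compensator $\alpha^i t'$ destroys the nonpositive drift --- so Ville's inequality cannot be applied to it. The fallback route, $\sup_{[0,t]}V\le\sup_{[0,t]}Y+\alpha^i t$ with Doob's inequality applied to the supermartingale $Y$ (which starts at $0$ but is not nonnegative, so one must use the version $\lambda P\{\sup Y\ge\lambda\}\le \mathbb{E}[Y_0]+\mathbb{E}[Y_t^-]\le\alpha^i t$), only yields $P\{\sup_{[0,t]}V\ge\lambda\}\le \alpha^i t/(\lambda-\alpha^i t)$; with $\lambda=\alpha^i t/(1-\eta^i)$ this is $(1-\eta^i)/\eta^i$, which is strictly worse than $1-\eta^i$ and vacuous for $\eta^i\le 1/2$. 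The standard fix is to compensate backwards in time over the fixed horizon: set $Z_{t'}:=V_{t'}+\alpha^i(t-t')$ on $[0,t]$. Then $Z$ is nonnegative (since $V\ge0$ and $t'\le t$), its drift is $LV-\alpha^i\le-\mu^i V\le0$, so it is a nonnegative supermartingale with $Z_0=\alpha^i t$ dominating $V$ on $[0,t]$, and Doob's maximal inequality gives exactly $P\{\sup_{0\le t'\le t}V_{t'}\ge\lambda\}\le \alpha^i t/\lambda$. With that substitution (and the localization argument you already defer, which is indeed routine for the quadratic $\phi$ and linear dynamics here), your proof is complete; the residual strict-versus-nonstrict issue in (\ref{prob}) is glossed over by the paper as well.
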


%\textcolor{blue}{[more explanation here for proof, also is $ \eta ^i $ constant?]}

\begin{proof} 
	Straightforward from Proposition 2.2 of \cite{Julius2008CDC} and (\ref{bisimulation_ineq}).
\end{proof}  	
 In Equation (\ref{prob}), $\phi$ provides a probabilistic upper bound for the distance between the states of the stochastic system and its nominal system in a finite time horizon.

\subsection{Metric Temporal Logic (MTL)}        
\label{MTL}   
In this subsection, we briefly review  metric temporal logic (MTL)~\cite{FainekosMTL}. The state $x$ of the system belongs to
the domain $\mathcal{X}\subset\mathbb{R}^n$. The time set is $\mathbb{T} = \mathbb{R}_{\ge0}$. The domain $\mathbb{B} = \{\textrm{True}, \textrm{False}\}$ is the Boolean domain, and the time index set is $\mathbb{I} = \{0,1,\dots\}$. We use $t[k]\in\mathbb{T}$ to denote the time instant at time index $k\in\mathbb{I}$ and $x[k]\triangleq x(t[k])$ to denote the value of $x$ at time $t[k]$. With slight abuse of notation, we use $\xi$ to denote an trajectory of the system as a function from $\mathbb{T}$ to $\mathcal{X}$. A set $AP$ is a set of atomic propositions, each mapping $\mathcal{X}$ to $\mathbb{B}$. The syntax of MTL is defined recursively as follows:
\[
\varphi:=\top\mid \pi\mid\lnot\varphi\mid\varphi_{1}\wedge\varphi_{2}\mid\varphi_{1}\vee
\varphi_{2}\mid\varphi_{1}\mathcal{U}_{\mathcal{I}}\varphi_{2},
\]
where $\top$ stands for the Boolean constant True, $\pi\in AP$ is an atomic
proposition, $\lnot$ (negation), $\wedge$ (conjunction), $\vee$ (disjunction)
are standard Boolean connectives, $\mathcal{U}$ is a temporal operator
representing \textquotedblleft until\textquotedblright, $\mathcal{I}$ is a time index interval of
the form $\mathcal{I}=[i_{1},i_{2}]$ ($i_1\le i_2$, $i_1, i_2\in\mathbb{I}$). We
can also derive two useful temporal operators from \textquotedblleft
until\textquotedblright~($\mathcal{U}$), which are \textquotedblleft
eventually\textquotedblright~$\Diamond_{\mathcal{I}}\varphi=\top\mathcal{U}_{\mathcal{I}}\varphi$ and
\textquotedblleft always\textquotedblright~$\Box_{\mathcal{I}}\varphi=\lnot\Diamond_{\mathcal{I}}\lnot\varphi$. We define the set of states that satisfy the atomic proposition $\pi$ as $\mathcal{O}(\pi)\subset \mathcal{X}$. 
%\textcolor{blue}{should this be $\mathcal{O}(\pi)\subset \mathcal{X}$ instead?} 

We denote the distance from $x$ to a set $D\subseteq\mathcal{X}$ as \textbf{dist}$_d(x,D)\triangleq$inf$\{d(x, x')\mid x'\in cl(D)\}$, where $d$ is a metric on $\mathcal{X}$ and $cl(D)$ denotes the closure of the set $D$. In this paper, we use the metric $d(x,x')=\norm{x-x'}$, where $\left\Vert\cdot\right\Vert $ denotes the 2-norm. We denote the depth of $x$ in $D$ as \textbf{depth}$_d(x,D)\triangleq$ \textbf{dist}$_d(x,\mathcal{X}\setminus D)$. We define the signed distance from $x$ to $D$ as $\textbf{Dist}_d(x,D)\triangleq-\textbf{dist}_d(x,D)$, if $x$ $\not\in D$; and $\textbf{Dist}_d(x,D)\triangleq\textbf{depth}_d(x,D)$, if $x$ $\in D$.
%\begin{equation}
%
%\begin{cases}
%-\textbf{dist}_d(x,D),& \mbox{if $x$ $\not\in D$},\\                      
%\textbf{depth}_d(x,D), & \mbox{if $x$ $\in D$}.
%\end{cases}                        
%\end{equation}

The Boolean semantics of MTL can be found in \cite{FAINEKOScontinous}, with the slight variation that we only evaluate the satisfaction of a trajectory with respect to an MTL formula at discrete-time instants $t[k]~(k\in\mathcal{I})$. The robustness degree of a trajectory $\xi$ with respect to an MTL formula $\varphi$ at time index $k$, denoted as $\left[\left[\varphi\right]\right](\xi, k)$, is defined recursively as follows:
\[
\begin{split}
\left[\left[\top\right]\right](\xi, k) :=& +\infty,\\
\left[\left[ \pi\right]\right](\xi, k)  :=&\textbf{Dist$_d(x[k],\mathcal{O}(\pi))$},\\
\left[\left[ \neg\varphi\right]\right](\xi, k)  :=&-\left[\left[ \varphi\right]\right](\xi, k),\\
\left[\left[\varphi_1\vee\varphi_2\right]\right](\xi, k)  :=&\max\big(\left[\left[ \varphi_1\right]\right](\xi, k),\left[\left[ \varphi_2\right]\right](\xi, k)\big),\\
\left[\left[\varphi_1\mathcal{U}_{\mathcal{I}}\varphi_{2}\right]\right](\xi, k)  :=&\max_{k'\in (k+\mathcal{I})}\Big(\min\big(\left[\left[ \varphi_2\right]\right](\xi, k'),\\& \min_{k\leq k''<k'}\left[\left[\varphi_1\right]\right]
(\xi,k'')\big)\Big).  
\end{split} 
\]
As defined, $\left[\left[\varphi\right]\right](\xi, k)\ge0$ if $\xi$ satisfies $\varphi$ at time index $k$. 

\subsection{Differential Privacy for Dynamical Systems}	

In this subsection, we review the theoretical framework of differential privacy for dynamical systems.
%Denote agent $i$'s state trajectory by $\xi^i_{\bm\cdot;x^i_0,u^i}$. The $k^{th}$ element of $\xi^i_{\bm\cdot;x^i_0,u^i}$ is denoted by $x^{i}[k]\in\mathcal{X}^i\subset\mathbb{R}^{n_{i}}$
%for some $n_{i}\in\mathbb{N}$, and we also set ${ n =\sum_{i=1}^{N}n_i }$, which is the dimension of all agents' states in aggregate. 
The notion of differential privacy in this paper follows the definition of differential privacy for trajectories introduced in \cite{LeNy2014} and applied in \cite{YazdaniLQ2018}. 
%Differential privacy can be used to ensure that an adversary is unlikely to determine either the input or state trajectory of a system, and in this paper we implement differential privacy to protect state trajectories.

We consider the so-called ``input perturbation'' approach to differential privacy. This means that each agent will directly add noise to its own outputs before sharing them with a local hub. This has the advantage of \textit{privatizing} sensitive data before it is shared. Formally, each agent's state trajectory will be made approximately indistinguishable from other nearby state trajectories which the same agent individually could have produced; the notions of ``nearby'' and ``approximately indistinguishable'' are formalized below in Definitions~\ref{def:adj} and~\ref{dfn:differential privacy}. 
% Privacy guarantees are likewise provided on an individual basis, and the input perturbation approach will be used below to protect individual agent's state trajectories. 

With slight abuse of notations, we consider discrete-time trajectories 
$\xi^{i}=[x^i[0],$ $x^i[1],\dots,x^i[k],\dots]$,
where $x^i[k]=\xi^i_{t[k];x^i_0,u^i}\in\mathbb{R}^{n_i}$ for all $k$. We also
use the~$\ell_p$-norm
${
	\left\Vert \xi^{i}\right\Vert _{\ell_{p}}:=\left(\sum_{k=1}^{\infty}\left\Vert x^i[k]\right\Vert _{p}^{p}\right)^{\frac{1}{p}},
}$
where $\left\Vert .\right\Vert _{p}$ is the ordinary $p\text{-norm}$
on $\mathbb{R}^{d}$. We further define the set 
$
\ell_{p}^{d}:=\left\{ \xi^{i}\mid x^i[k]\in\mathbb{R}^{d},\;\left\Vert \xi^{i}\right\Vert _{\ell_{p}}<\infty\right\} .
$

The state trajectory $\xi^i$ is contained
in the set $\tilde{\ell}_{2}^{n_i}$, which is the set of sequences
of vectors in $\mathbb{R}^{n_i}$ whose finite truncations are all
in $\ell_{2}^{n_i}$. Formally, we define the truncation operator
$P_{L}$ over trajectories as follows: $P_{L}\left(\xi^i\right)=x^i[k]$, if $k\le L$; and $P_{L}\left(\xi^i\right)=0$, otherwise.
We say that $\xi^i \in\tilde{\ell}_{2}^{n_i}$ if and only if      
$P_{L}[\xi^i ]\in\ell_{2}^{n_i}$ for all $L\in\mathbb{N}$.

A differentially private mechanism makes adjacent trajectories
produce outputs which are similar in a precise sense,
making the state trajectories approximately indistinguishable to the recipient of a system's outputs.
% The choice of adjacency relation is a key
%part of any differential privacy implementation because it specifies which sensitive pieces of data must be
%made approximately indistinguishable.  
To formulate differential privacy for trajectories, we next define the adjacency relation
over the space $\tilde{\ell}_{2}^{n_i}$
defined above. 
\begin{dfn} \label{def:adj} \emph{(Adjacency) }
	Fix an adjacency parameter ${\nu^i > 0}$ for agent $ i $. 
	The adjacency relation $\adj_{\nu^i}$ is defined
	for all $\xi^i, \xi'^i \in \tell^{n_i}_2$ as %\textcolor{blue}{ZX: change v and w to $\xi^i$, $\xi'^i$?}
	\begin{equation}
	\adj_{\nu^{i}}(\xi^i, \xi'^i) = 
	\begin{cases}
	1, & \mbox{if}~ \|\xi^i - \xi'^i\|_{\ell_2} \leq \nu^i, \\
	0, & \text{otherwise}. 	
	\end{cases}
	\end{equation}
\end{dfn}

Two state trajectories of agent $ i $ are thus adjacent if the~$\ell_2$ distance between them
is not more than~$\nu^i$. Differential privacy must therefore make agent $ i $'s
state trajectory approximately indistinguishable from all others contained
in an~$\ell_2$-ball of radius~$\nu^i$ centered on its actual trajectory.

Next is a formal definition of differential privacy for dynamical systems which specifies the probabilistic guarantees of privacy. 
To state it, we will use a probability space $\text{(\ensuremath{\Omega}, \ensuremath{\mathcal{F}}, \ensuremath{\mathbb{P}}})$.
This definition considers outputs in the space $\tilde{\ell}_{2}^{q_i}$
and uses a $\sigma\text{-algebra}$ over $\tilde{\ell}_{2}^{q_i}$,
denoted $\Theta_{2}^{q_i}$, construction of which can be found in~\cite{hajek2015random}. 
\begin{dfn}\label{dfn:differential privacy}
	\emph{($(\epsilon^i,\delta^i)$-Differential Privacy for Agent $i$)} 
	With\emph{ $\epsilon^i>0$} and $\delta^i\in\left(0,\nicefrac{1}{2}\right)$ for agent $i$, a mechanism       
	$\mathcal{M}:\tilde{\ell}_{2}^{n_i}\times\Omega\rightarrow\tilde{\ell}_{2}^{q_i}$
	is $\left(\epsilon^i,\delta^i\right)$-differentially private if for all adjacent ${ \xi^i ,{\xi'^i}\in\tilde{\ell}_{2}^{n_i} }$ and for all $S\in\Theta_{2}^{q_i}$,
	we have
	\begin{equation}
	\mathbb{P}\left[\mathcal{M}\left(\xi^i\right)\in S\right]\le e^{\epsilon^i}\mathbb{P}\left[\mathcal{M}\left({\xi'^i}\right)\in S\right]+\delta^i.
	\end{equation}
\end{dfn}

At time~$k$, agent~$i$ has state~$x^i(k)\in \mathbb{R}^{n_i}~(n_i\in\mathbb{N})$. We convert the continuous-time linear dynamics in (\ref{syslinear}) to the following discrete-time linear dynamics for agent $i$
\begin{equation}
\begin{split}
x^i[k+1] &= \bar{A}^i x^i[k]+\bar{B}^i u^i[k]+w^i[k],\\
y^i[k] &= \bar{C}^i x^i[k],
\end{split}
\label{eq:dynamics}
\end{equation}
where $u^i[k]\in \mathbb{R}^{m_i}$ is the input, process noise for agent $i$ is denoted by $w^i[k]\in\mathbb{R}^{n_i}$, and the matrices $\bar{A}^i\in\mathbb{R}^{n_i\times n_i}$, $\bar{B}^i\in\mathbb{R}^{n_i\times m_i}$ are derived from $A^i$ and $B^i$ in (\ref{syslinear}) for the discrete-time
state-space representation of agent $i$, and $ \bar{C}^i\in \mathbb{R}^{q_i\times n_i} $. The probability distribution of the process noise is given by~$w^i[k]\sim\mathcal{N}\left(0,W^i\right)$, 
where $0\prec W^i\in\mathbb{R}^{n_i\times n_i}$, and all process noise terms are assumed to have finite variance.               

At each time $k$, agent $ i $ outputs the value $y^i[k]$ and we define $\rho ^i = [y^i[0],y^i[1],\dots,y^i[k],\dots]$. 
Absent any privacy protections, the values of~$y^i$ could reveal those of~$x^i$ over time, which would compromise agent~$i$'s privacy by revealing its state trajectory. Therefore, noise must be added to agent~$i$'s output to protect its state trajectory. Calibrating the level of noise is done using the ``sensitivity'' of an agent's output, which we define next for the input perturbation privacy.
%
% that we use; we emphasize that, although agents perturb the outputs of their own dynamics, the term ``input
%perturbation" label applies because agents perturb what will become the \emph{inputs} to a Kalman filter. 
%\textcolor{blue}{ZX: This above sentence is very hard to understand. What is "label"?}
%The following bound is adapted from~\cite[Section IV-A]{LeNy2014}.

\begin{dfn}\label{dfn:sensitivity for dynamic systems}
	\emph{(Sensitivity for Input Perturbation Privacy)} 
	The $\ell_{2}\text{-norm}$ sensitivity of
	agent~$i$'s output map is the greatest distance
	between two output trajectories which correspond to adjacent state
	trajectories. Formally, for $\xi^i,{\xi'^i}\in\tilde{\ell}_2^{n_i}$,
	\begin{equation}
	\Delta_{\ell_{2}}\rho^i: =\underset{\xi_i,\xi'_i\mid\textnormal{Adj}_{\nu^i}(\xi^i,{\xi'^i})=1}{\sup}\left\Vert \bar{C}^i \xi^i-\bar{C}^i{\xi'^i}\right\Vert _{\ell_{2}}. 
	\end{equation}
\end{dfn}
We can bound $\Delta_{\ell_{2}}\rho^i$ via $ \Delta_{\ell_{2}}\rho^i \le \norm{\bar{C}^i} \nu^i $
% \textcolor{blue}{ZX: what is $s_1$?}
 \cite{LeNy2014}, where $\norm{\bar{C}^i}$ denotes the largest singular value of $\bar{C}^i$.
Various mechanisms have been developed for enforcing differential privacy in the literature \cite{Dwork2014algorithmic}. The Gaussian mechanism requires adding Gaussian noise to outputs to mask agents' state trajectories, and it can be useful in control settings that are robust to Gaussian noise. We next provide a definition of the Gaussian mechanism in terms of the $ \mathcal{Q} $-function, defined by $
\mathcal{Q}\left(y\right)=\frac{1}{\sqrt{2\pi}}\int_{y}^{\infty}e^{-\frac{z^{2}}{2}}dz
$.

\begin{lemma}\label{lem:gaussian mechanism}
	\emph{(Input Perturbation Gaussian Mechanism for Linear Systems)}
	Let agent~$i$ specify privacy parameters
	$\epsilon^i>0$ and $\delta^i\in\left(0,\nicefrac{1}{2}\right)$. 
	Let~$\rho^i \in \tilde{\ell}_2^{q_i}$ denote the output of a system with state trajectories
	in~$\tilde{\ell}_{2}^{n_i}$,
	and denote its ${\ell}_{2}$-norm sensitivity by $\Delta_{\ell_{2}}\rho^i$.
	Then the Gaussian mechanism for $\left(\epsilon^i,\delta^i\right)$-differential privacy
	takes the form
	\begin{equation}
	\tilde{y}^i[k]=y^i[k]+v^i[k],
	\end{equation}
	where~$v^i$ is a stochastic process with $v^i[k]\sim\mathcal{N}\left(0,{\sigma^i}^{2}I_{q_i}\right)$,
	$I_{q_i}$ is the $q_i\times q_i$ identity matrix, and 
	\begin{equation}\label{eq:sigma}
	\sigma^i\ge\frac{\Delta_{\ell_{2}}\rho^i}{2\epsilon^i}\left(\iota_{\delta_i}+\sqrt{\iota_{\delta^i}^{2}+2\epsilon^i}\right)\text{ where }\iota_{\delta^i}:=\mathcal{Q}^{-1}\left(\delta^i\right). 
	\end{equation}
	This Gaussian mechanism provides $\left(\epsilon^i,\delta^i\right)$-differential privacy.
\end{lemma}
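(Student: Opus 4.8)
The plan is to recognize Lemma~\ref{lem:gaussian mechanism} as the trajectory-valued version of the classical Gaussian mechanism, so the proof follows the standard three-step template: (i) identify the two output distributions induced by adjacent state trajectories as Gaussian-type measures on $\tilde{\ell}_2^{q_i}$ that differ only by a mean shift whose $\ell_2$-norm is controlled by $\Delta_{\ell_2}\rho^i$; (ii) analyze the associated privacy-loss random variable, which reduces to a one-dimensional Gaussian; and (iii) convert the required tail bound into the stated lower bound on $\sigma^i$ by solving a quadratic.

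Concretely, I would first fix two adjacent state trajectories $\xi^i,\xi'^i$ with $\adj_{\nu^i}(\xi^i,\xi'^i)=1$ and write the mechanism outputs as $\mathcal{M}(\xi^i)=\bar{C}^i\xi^i+v^i$ and $\mathcal{M}(\xi'^i)=\bar{C}^i\xi'^i+v^i$, where $v^i$ has independent coordinates each distributed as $\mathcal{N}(0,{\sigma^i}^2)$. Since the same noise process appears in both cases, the two output laws are translates of one another, and by Definition~\ref{dfn:sensitivity for dynamic systems} the shift $h:=\bar{C}^i\xi^i-\bar{C}^i\xi'^i$ satisfies $\|h\|_{\ell_2}\le\Delta_{\ell_2}\rho^i\le\|\bar{C}^i\|\nu^i$; in particular $h\in\ell_2$. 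Working with the finite truncations $P_L$ (on which each law is a genuine finite-dimensional Gaussian on $\mathbb{R}^{q_i(L+1)}$) and passing $L\to\infty$ using the $\sigma$-algebra $\Theta_2^{q_i}$ of \cite{hajek2015random}, the two output measures are mutually absolutely continuous, and the logarithm of their Radon--Nikodym derivative --- the privacy loss --- is, under $\mathrm{Law}(\mathcal{M}(\xi^i))$, a real Gaussian random variable with mean $\tfrac{\|h\|_{\ell_2}^2}{2{\sigma^i}^2}$ and variance $\tfrac{\|h\|_{\ell_2}^2}{{\sigma^i}^2}$. This is the key reduction: all of the infinite-dimensional structure enters only through the scalar $\|h\|_{\ell_2}$.

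Next I would invoke the standard privacy-loss argument. For any $S\in\Theta_2^{q_i}$, split $S$ into the subset on which the privacy loss is at most $\epsilon^i$ --- which contributes at most $e^{\epsilon^i}\,\mathbb{P}[\mathcal{M}(\xi'^i)\in S]$ --- and its complement, which contributes at most $\mathbb{P}[\text{privacy loss}>\epsilon^i]$. Using the Gaussian law of the privacy loss together with the fact that $\mathcal{Q}\!\left(\tfrac{\epsilon^i\sigma^i}{d}-\tfrac{d}{2\sigma^i}\right)$ is increasing in $d$, and the bound $d=\|h\|_{\ell_2}\le\Delta_{\ell_2}\rho^i$, one gets $\mathbb{P}[\text{privacy loss}>\epsilon^i]\le\mathcal{Q}\!\left(\tfrac{\epsilon^i\sigma^i}{\Delta_{\ell_2}\rho^i}-\tfrac{\Delta_{\ell_2}\rho^i}{2\sigma^i}\right)$, so it remains to show the stated choice of $\sigma^i$ forces this quantity to be at most $\delta^i$. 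Since $\delta^i\in(0,\nicefrac{1}{2})$ gives $\iota_{\delta^i}=\mathcal{Q}^{-1}(\delta^i)>0$ and $\mathcal{Q}$ is decreasing, $\mathcal{Q}\!\left(\tfrac{\epsilon^i\sigma^i}{\Delta_{\ell_2}\rho^i}-\tfrac{\Delta_{\ell_2}\rho^i}{2\sigma^i}\right)\le\delta^i$ is equivalent to $\tfrac{\epsilon^i\sigma^i}{\Delta_{\ell_2}\rho^i}-\tfrac{\Delta_{\ell_2}\rho^i}{2\sigma^i}\ge\iota_{\delta^i}$, which after multiplying through by $\sigma^i>0$ becomes the quadratic inequality $\tfrac{\epsilon^i}{\Delta_{\ell_2}\rho^i}{\sigma^i}^2-\iota_{\delta^i}\sigma^i-\tfrac{\Delta_{\ell_2}\rho^i}{2}\ge 0$, whose larger root is exactly $\tfrac{\Delta_{\ell_2}\rho^i}{2\epsilon^i}\big(\iota_{\delta^i}+\sqrt{\iota_{\delta^i}^2+2\epsilon^i}\big)$ as in \eqref{eq:sigma}; since the parabola opens upward, every $\sigma^i$ at least this large satisfies it, which closes the argument.

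The genuinely delicate point is step (i)--(ii): making precise that the output law is a well-defined measure on $\tilde{\ell}_2^{q_i}$ (note that the i.i.d.\ noise sequence $v^i$ is almost surely not in $\ell_2$, only in $\tilde{\ell}_2^{q_i}$), that the two adjacent output measures are mutually absolutely continuous, and that the privacy loss --- obtained as the $L\to\infty$ limit of the truncated log-likelihood ratios, which converges because $\sum_k\|h[k]\|^2=\|h\|_{\ell_2}^2<\infty$ by adjacency --- has the claimed scalar Gaussian distribution. This is exactly the machinery established in \cite{LeNy2014} and \cite{hajek2015random}, which I would cite rather than reproduce; once the problem is reduced to the one-dimensional Gaussian tail, the privacy-loss split and the quadratic computation are routine.
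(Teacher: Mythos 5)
Your proposal is correct: the paper itself offers no argument beyond the citation to \cite[Corollary 1]{LeNy2014}, and what you have written is a faithful reconstruction of exactly that result's standard proof (mean-shifted Gaussian output laws, the scalar privacy-loss random variable, the $\mathcal{Q}$-function tail bound, and the quadratic whose larger root gives \eqref{eq:sigma}), deferring the same infinite-dimensional measure-theoretic machinery to the same references \cite{LeNy2014} and \cite{hajek2015random}. No gaps; this is essentially the same route the paper takes, just spelled out.
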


\emph{Proof:} See \cite[Corollary 1]{LeNy2014}. \hfill $\blacksquare$                   

\noindent In words, the Gaussian mechanism adds i.i.d Gaussian
noise point-wise in time to the output of a system to keep its state trajectory private. We will use the Gaussian mechanism to enforce differential privacy for the remainder of the paper.

\section{Differentially Private Controller Synthesis with Metric Temporal Logic Specifications}
In this section, we first present the problem formulation of differentially private controller synthesis with metric temporal logic specifications, then provide the theoretical framework and algorithm for solving the problem.

\begin{subsection}{Problem Formulation}\label{subsec:problemForm}

To formulate the problem, we first define the network-level dynamics. 
We consider the stochastic control system with the aggregated states as below:
\begin{align}
\begin{split}
&dx=(Ax+Bu)dt+\Upsilon dw, 
\end{split}
\label{syslinear_aggregate}
\end{align}
where $x=[(x^{1})^T,\dots,(x^{N})^T]^T $ and $u=[(u^{1})^T,$ $\dots,(u^{N})^T]^T$, where $A=\textnormal{diag}(A_1,\dots,A_N)$ is a block diagonal matrix with blocks $A_1$ through $A_N$, $B=\textnormal{diag}(B_1,\dots,B_N)$, $C=\textnormal{diag}(C_1,\dots,C_N)$, and $ \Upsilon = \textnormal{diag}\left[ \Upsilon^1,\dots,\Upsilon^N \right] $.  We denote the aggregated system trajectory starting from $x_0=[(x_0^{1})^T,\dots,(x_0^{N})^T]^T$ with the input signal $u(\cdot)$ as $\xi_{\bm\cdot;x_0,u}$. 

Each agent reports its state information with added privacy noise (e.g., Gaussian noise) to a \textit{local hub}. Each local hub runs a Kalman filter to estimate the state of the corresponding agent and periodically (with period $T\in\mathbb{Z}_{>0}$) send the state estimate of the agent to a \textit{cloud}. The cloud computes the optimal inputs for a control horizon of $T$ time instants for each agent with respect to an MTL specification $\varphi$. We formulate the controller synthesis problem as follows.
 
%\textcolor{blue}{[is there one local hub for every agent? Kasra: Explain H and N kalman filter]}

%\textcolor{blue}{[why $ \varphi $ is only known by cloud? Kasra: why $ \varphi $ is okay to be public?]}

%\textcolor{blue}{[what does this mean? $\mathbb{P}\{\left[\left[\varphi\right]\right](\xi_{\bm\cdot;x_0,u}, 0)\ge0\}\ge \kappa$ explain]}

\begin{problem} 
	Given an MAS consisting of $N$ agents, $N$ local hubs, a cloud, an MTL specification $\varphi$ and privacy parameters $\epsilon^i\in[\epsilon_{\min}, \epsilon_{\max}]$ (where $0<\epsilon_{\min}\le\epsilon_{\max}$), $\delta^i\in[\delta_{\min}, \delta_{\max}]$ (where $0<\delta_{\min}\le\delta_{\max}<\nicefrac{1}{2}$), compute the input signals $u^1(\bm\cdot), \dots, u^T(\bm\cdot)$ that minimize $\sum_{i=1}^{N}\norm{u^i(\bm\cdot)}$ while satisfying $\mathbb{P}\{\left[\left[\varphi\right]\right](\xi_{\bm\cdot;x_0,u}, 0)\ge0\}\ge \chi$ for given $\chi\in(0,1]$, i.e., the trajectory $\xi_{\bm\cdot;x_0,u}$ satisfies the MTL specification $\varphi$ with probability at least $\chi$. 
	\label{problem}                               
\end{problem}        

\end{subsection}

\begin{subsection}{Kalman Filtering at Local Hubs}

We assume that the privatized output of each agent is transmitted to the local hubs in discrete-time where the discrete-time dynamics for agent $i$ is as follows:
\begin{equation}
\begin{split}
x^i[k+1] &= \bar{A}^i x^i[k]+\bar{B}^i u^i[k]+w^i[k],\\
\tilde{y}^i[k] &= \bar{C}^i x^i[k]+v^i[k],
\end{split}
\label{eq:dynamics}
\end{equation}  
where the privacy noise $ v^i[k]\sim\mathcal{N}(0,{\sigma^i}^2I_{q_i})$ is a Gaussian random variable and where $\sigma^i$ is chosen according to Equation~\eqref{eq:sigma} corresponding to privacy parameters $ (\epsilon ^i, \delta ^ i) $. We also define the privacy covariance matrix $ V^i := {\sigma^i}^2I_{q_i} $.

%\begin{equation}
%\begin{split}
%x^i[k+1] &= \bar{A}^i x^i[k]+\bar{B}^i u^i[k]+w^i[k]\\
%y^i[k] &= \bar{C}^i x^i[k],
%\end{split}
%\label{eq:networkDynamics}
%\end{equation}
%where $ x[k]=[(x^1[k])^T,\dots,(x^N[k])^T]^T $, $ u[k]=[(u^1[k])^T,\dots,(u^N[k])^T]^T $, $ y[k]=[(y^1[k])^T,\dots,(y^N[k])^T]^T $, $\bar{A}= \textnormal{diag}(\bar{A}_1,\dots,\bar{A}_N)$, $\bar{B}= \textnormal{diag}(\bar{B}_1,\dots,\bar{B}_N)$, and $\bar{C}= \textnormal{diag}(\bar{C}_1,\dots,\bar{C}_N)$.                                              
The local hubs are responsible for estimating the agents' states. For agent $i$, the prediction step of the Kalman filter is given by
\begin{equation}\label{kalmanfilterprior} 
\hat{x}^{-i}[k+1]=\bar{A}^i\hat{x}^i[k]+\bar{B}^iu^i[k],
\end{equation}                                                  
where $ \hat{x}^{-i}[k] $ is the \emph{a priori} state estimate. The \emph{a posteriori} state estimate $ \hat{x}^i[k] $ is updated as
\begin{equation}\label{eq:kalmanfilter}
\hat{x}^i[k+1]= \hat{x}^{-i}[k+1]+\barsig ^i \bar{C^i}^{T}{V^i}^{-1}(\tilde{y}^i[k+1]-\bar{C}^i \hat{x}^{-i}[k+1]).
\end{equation}
                                                                     
The \emph{a posteriori} error covariance matrix $\barsig^i$ is given by
\begin{align}\label{eq:Sigmabar}
\barsig^i&=\Sigma^i-\Sigma^i \bar{C^i}^{T}\left(\bar{C^i}\Sigma^i \bar{C^i}^{T}+V^i\right)^{-1}\bar{C^i}\Sigma^i
\end{align}
where the \emph{a priori} error covariance matrix $\Sigma^i$ is the unique positive semidefinite solution
to the discrete algebraic Riccati equation
\begin{multline}\label{eq:DARE}
\Sigma^i=\bar{A^i}\Sigma^i \bar{A^i}^{T}-\bar{A^i}\Sigma^i \bar{C^i}^{T}\left(\bar{C^i}\Sigma^i \bar{C^i}^{T}+V^i\right)^{-1}\bar{C^i}\Sigma^i \bar{A^i}^{T} \\ +W^i 
\end{multline}                            
Increasing the level of privacy is achieved by adding more noise, namely, using smaller privacy parameters $ \epsilon ^i $ and $\delta ^i$  translates to imposing a larger $ \sigma ^ i$. Larger $ \sigma ^i $ accordingly makes the estimation more uncertain, hence making the privacy covariance $ V^i $ larger.  As seen in Equation~\eqref{eq:DARE}, increasing the noise naturally results in a larger \emph{a priori} covariance matrix $ \Sigma ^i $ and \emph{a posteriori} covariance matrix $ {\overline{\Sigma}}^i $ in Equation~\eqref{eq:Sigmabar}. Therefore, as we increase the strength of privacy, i.e. decrease the privacy values $ \epsilon ^i$ and $ \delta ^i $, the covariance matrices $ \Sigma ^i $ and $ {\overline{\Sigma}}^i $ monotonically get larger. 

As we did in Subsection~\ref{subsec:problemForm}, we assemble the covariance matrices $\Sigma^i $ and ${\overline{\Sigma}}^i$ into network-level matrices $ \Sigma = \textnormal{diag}\left(\Sigma^i,\dots,\Sigma^N \right) $ and $ \overline{\Sigma} = \textnormal{diag}\left( {\overline{\Sigma}}^1,\dots,{\overline{\Sigma}}^N \right) $. We also assume $\epsilon ^i \in[\epsilon_{\min}, \epsilon_{\max}]$ and $\delta^ i\in[\delta_{\min}, \delta_{\max}]$. The mean squared error (MSE) of the estimated states by the Kalman filter is computed by 
\begin{equation}\label{eq:mse}
\text{MSE} = \mathbb{E}[\Vert x[k]-\hat{x}[k]\Vert^{2}\big]=\text{tr}\overline{\Sigma}
\end{equation}
and we denote the upper and lower bounds on the MSE by 
\begin{equation}\label{key}
\left(\text{tr}\overline{\Sigma} \right)_{\min}\le\textnormal{MSE}\le \left(\text{tr}\overline{\Sigma} \right)_{\max},
\end{equation}
where $ \left(\text{tr}\overline{\Sigma} \right) _{\min} $ corresponds to the lowest level of privacy achieved by $(\epsilon_{\max} ,\delta_{\max})$ and $ \left(\text{tr}\overline{\Sigma} \right) _{\max} $, corresponds to the highest level of privacy, achieved by $(\epsilon_{\min} ,\delta_{\min})$. The following two lemmas bound the MSE values of interest.
\begin{lemma}\label{lem:TraceofProduct}
	Let $R$ and $M$ be $n\times n$ matrices. If $R=R^{T}\succeq0$
	and $M$ is symmetric, then 	
	\begin{equation}
	\lambda_{n}(M)\textnormal{tr}(R)\leq\textnormal{tr}(RM)\leq\lambda_{1}(M)\textnormal{tr}(R).
	\end{equation}
\end{lemma}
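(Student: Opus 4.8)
The plan is to reduce the inequality to the diagonal case via the spectral theorem for $M$, and then exploit that a symmetric positive semidefinite matrix has nonnegative diagonal entries. Throughout I use the convention $\lambda_1(M)\ge\lambda_2(M)\ge\cdots\ge\lambda_n(M)$, consistent with the way $\lambda_1$ and $\lambda_n$ are invoked elsewhere in the paper.

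First I would write $M=Q\Lambda Q^{T}$ with $Q$ orthogonal and $\Lambda=\textnormal{diag}\bigl(\lambda_1(M),\dots,\lambda_n(M)\bigr)$, which is possible since $M=M^{T}$. Using cyclicity of the trace, $\tr(RM)=\tr\bigl(RQ\Lambda Q^{T}\bigr)=\tr\bigl(Q^{T}RQ\,\Lambda\bigr)$. Set $\tilde R:=Q^{T}RQ$; this matrix is again symmetric and positive semidefinite (congruence by an invertible matrix preserves $\succeq 0$), and $\tr(\tilde R)=\tr(R)$ because orthogonal conjugation leaves the trace unchanged.

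Next I would expand $\tr(\tilde R\,\Lambda)=\sum_{j=1}^{n}\tilde R_{jj}\,\lambda_j(M)$. Since $\tilde R\succeq 0$, each diagonal entry satisfies $\tilde R_{jj}=e_j^{T}\tilde R\,e_j\ge 0$. Therefore $\sum_{j}\tilde R_{jj}\,\lambda_j(M)\le \lambda_1(M)\sum_{j}\tilde R_{jj}=\lambda_1(M)\,\tr(R)$, and symmetrically $\sum_{j}\tilde R_{jj}\,\lambda_j(M)\ge \lambda_n(M)\,\tr(R)$; combining the two bounds with the identity $\tr(RM)=\tr(\tilde R\,\Lambda)$ yields the claim. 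As an alternative, cleaner route I would mention factoring $R=R^{1/2}R^{1/2}$ with $R^{1/2}\succeq 0$ symmetric, writing $\tr(RM)=\tr\bigl(R^{1/2}MR^{1/2}\bigr)$, using $\lambda_n(M)I\preceq M\preceq \lambda_1(M)I$ and congruence by $R^{1/2}$ to get $\lambda_n(M)R\preceq R^{1/2}MR^{1/2}\preceq \lambda_1(M)R$, and then taking traces via monotonicity of the trace on the Loewner order.

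There is no genuinely hard step here: the only points needing care are that orthogonal conjugation preserves the trace and that $\tilde R\succeq 0$ forces $\tilde R_{jj}\ge 0$ (equivalently, monotonicity of the trace on $\succeq$), both of which are standard. The main thing to get right is simply fixing the eigenvalue-ordering convention so that $\lambda_1(M)$ and $\lambda_n(M)$ in the statement denote the largest and smallest eigenvalues, respectively.
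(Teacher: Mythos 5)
Your proof is correct. Note, however, that the paper does not actually prove this lemma: it simply cites \cite[Fact 5.12.4]{Bernstein2009}, so there is no in-paper argument to compare against. Your main route --- diagonalize $M=Q\Lambda Q^{T}$, pass to $\tilde R=Q^{T}RQ$ by cyclicity of the trace, and bound $\sum_{j}\tilde R_{jj}\lambda_{j}(M)$ using $\tilde R_{jj}\ge 0$ --- is the standard self-contained derivation of exactly this fact, and your alternative via $\lambda_{n}(M)I\preceq M\preceq\lambda_{1}(M)I$ and congruence by $R^{1/2}$ is equally valid and arguably slicker, since it needs only monotonicity of the trace on the Loewner order rather than an explicit eigendecomposition. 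You are also right to flag the ordering convention $\lambda_{1}(M)\ge\cdots\ge\lambda_{n}(M)$; that is the convention the paper (and Bernstein) uses, and it is the only point at which the statement could be misread. Either of your two arguments would serve as a complete replacement for the paper's citation.
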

\begin{IEEEproof}
	See \cite[Fact 5.12.4]{Bernstein2009}. \hfill $\blacksquare$
\end{IEEEproof}

\begin{lemma}\label{lem:mseBound}
	Assume $M\succeq0 $. Then the MSE of estimation error scaled by the matrix $M$ is bounded by
	\begin{equation*}\label{key}
	\resizebox{.5 \textwidth}{!} 
	{
		$ \lambda_{n}(M)\left(\text{tr}\overline{\Sigma} \right)_{\min}\leq \mathbb{E}\left[\left(x-\hat{x}\right)^{T}M\left(x-\hat{x}\right)\right]\leq\lambda_{1}(M)\left(\text{tr}\overline{\Sigma} \right) _{\max}. $
	}	
	\end{equation*}
\end{lemma}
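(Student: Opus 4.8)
The plan is to turn the scalar quadratic-form expectation into the trace of a matrix product and then apply Lemma~\ref{lem:TraceofProduct}. First I would use the cyclic invariance of the trace to write, pointwise, $(x-\hat{x})^{T}M(x-\hat{x}) = \textnormal{tr}\bigl(M(x-\hat{x})(x-\hat{x})^{T}\bigr)$. Taking expectations, moving the expectation inside the trace by linearity, and using the standard property of the Kalman filter that the \emph{a posteriori} estimation error has zero mean and covariance $\overline{\Sigma}$ (this is exactly the matrix produced by \eqref{eq:Sigmabar}), I obtain $\mathbb{E}\bigl[(x-\hat{x})^{T}M(x-\hat{x})\bigr] = \textnormal{tr}\bigl(M\overline{\Sigma}\bigr) = \textnormal{tr}\bigl(\overline{\Sigma}M\bigr)$.

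Next I would invoke Lemma~\ref{lem:TraceofProduct} with $R=\overline{\Sigma}$ and the given $M$: the matrix $\overline{\Sigma}$ is symmetric positive semidefinite (it is an error covariance, equivalently the solution of \eqref{eq:Sigmabar}), and $M$ is symmetric since $M\succeq 0$, so the hypotheses are met. This yields $\lambda_{n}(M)\,\textnormal{tr}(\overline{\Sigma}) \le \textnormal{tr}(\overline{\Sigma}M) \le \lambda_{1}(M)\,\textnormal{tr}(\overline{\Sigma})$. Finally I would bound $\textnormal{tr}(\overline{\Sigma})$ between the two extreme MSE values: by \eqref{eq:mse} and the preceding discussion of how $\overline{\Sigma}$ grows monotonically as the privacy parameters shrink, we have $(\textnormal{tr}\,\overline{\Sigma})_{\min} \le \textnormal{tr}\,\overline{\Sigma} \le (\textnormal{tr}\,\overline{\Sigma})_{\max}$, where the endpoints correspond to $(\epsilon_{\max},\delta_{\max})$ and $(\epsilon_{\min},\delta_{\min})$ respectively. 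Since $M\succeq 0$ forces $\lambda_{n}(M)\ge 0$ and $\lambda_{1}(M)\ge 0$, multiplying these sandwich inequalities by the appropriate nonnegative eigenvalue preserves the direction, and chaining with the previous display gives $\lambda_{n}(M)(\textnormal{tr}\,\overline{\Sigma})_{\min}\le \mathbb{E}[(x-\hat{x})^{T}M(x-\hat{x})] \le \lambda_{1}(M)(\textnormal{tr}\,\overline{\Sigma})_{\max}$, as claimed.

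I do not expect a genuinely hard step. The only points that need care are (i) citing the correct Kalman-filter fact so that $\mathbb{E}[(x-\hat{x})(x-\hat{x})^{T}]$ is precisely $\overline{\Sigma}$ and not an \emph{a priori} covariance, and (ii) ensuring the monotonicity claim---that the smallest and largest traces are attained at the stated parameter endpoints---is invoked consistently with the paragraph preceding the lemma; everything else is routine linear algebra built directly on Lemma~\ref{lem:TraceofProduct}.
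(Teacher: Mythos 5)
Your proposal is correct and follows essentially the same route as the paper: rewrite the quadratic-form expectation as $\textnormal{tr}(M\overline{\Sigma})$ and apply Lemma~\ref{lem:TraceofProduct} with $R=\overline{\Sigma}$. You are in fact slightly more careful than the paper's own proof, which stops at $\lambda_{n}(M)\textnormal{tr}(\overline{\Sigma})\le\textnormal{tr}(M\overline{\Sigma})\le\lambda_{1}(M)\textnormal{tr}(\overline{\Sigma})$ and leaves implicit the final chaining through $(\textnormal{tr}\,\overline{\Sigma})_{\min}$ and $(\textnormal{tr}\,\overline{\Sigma})_{\max}$ and the observation that $M\succeq0$ makes the eigenvalues nonnegative so the inequality directions are preserved.
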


\begin{IEEEproof}
	By immediate expansion and using Equation~\eqref{eq:mse} we get 
	\begin{equation*}\label{key}
	\mathbb{E}\left[\left(x-\hat{x}\right)^{T}M\left(x-\hat{x}\right)\right]=\text{tr}(M\overline{\Sigma})
	\end{equation*}
	and using Lemma~\ref{lem:TraceofProduct} we find 
	\begin{equation}\label{key}
	\lambda_{n}(M)\textnormal{tr}(\overline{\Sigma})\leq\textnormal{tr}(M\overline{\Sigma})\leq\lambda_{1}(M)\textnormal{tr}(\overline{\Sigma}),
	\end{equation}
	which completes the proof.  \hfill $\blacksquare$
\end{IEEEproof}

Next, we use Markov's inequality to write
\begin{multline}\label{key}
\mathbb{P}\left[( x[k]-\hat{x}[k] )^TM( x[k]-\hat{x}[k] )\ge \beta\right]\\\le \frac{\mathbb{E}\left[ ( x[k]-\hat{x}[k] )^TM( x[k]-\hat{x}[k] ) \right]}{\beta},
\end{multline}
and using Lemma~\ref{lem:mseBound} we can say
\begin{equation}\label{key}
\mathbb{P}\left[( x[k]-\hat{x}[k] )^TM( x[k]-\hat{x}[k] )\ge \beta\right]\le \frac{\textnormal{tr}(M\overline{\Sigma})}{\beta}.
\end{equation}

\end{subsection}

\begin{subsection}{Controller Synthesis With MTL Specifications in Cloud}
 In this subsection, we consider the controller synthesis problem (in the cloud) with MTL specifications.
	
%	\textcolor{blue}{The $ \tau $ notation is weird because we used the superscripts already for agents $ i $}

 We first provide the following theorem that bounds the robustness degrees of trajectories of a stochastic control system and those of the nominal deterministic control system with respect to an MTL specification.
 
	\begin{theorem} 
	\label{gamma}
	For a group of $N$ agents, a cloud and any MTL specification $\varphi$, if for a given time $\tau$ the following holds
	\begin{align}                                                                    
	\begin{split}
	&\mathbb{P}\left[( x_{\tau}-\hat{x}_{\tau} )^TM( x_{\tau}-\hat{x}_{\tau} )<\beta\right]>\gamma,
	\end{split}
	\label{end}
	\end{align} 
	where $x_{\tau}\triangleq\xi_{\tau;x_0,u}$ and $\hat{x}_{\tau}$ respectively denote the true network-level state at time $\tau$ and the \emph{a posteriori} state estimate of the Kalman filter for the network-level state at time $\tau$, $t\ge\tau$, $\beta>0, \gamma=1-\frac{\textnormal{tr}(M\overline{\Upsilon})}{\beta}\in(0,1]$, $\eta\in[0,1)$
	then we have 
	\begin{align} \nonumber
	&\mathbb{P}\left\lbrace  \vert \left[\left[\varphi\right]\right](\xi_{\bm\cdot; x_{\tau}, u}, t) -\left[\left[\varphi\right]\right](\xi^{\ast}_{\bm\cdot; \hat{x}_{\tau}, u}, t)\vert<\hat{\beta}\right\rbrace  >\gamma\eta,
	\end{align}  
	where $\hat{\beta}\triangleq\Big(\sqrt{\beta}+\sqrt{\frac{\alpha (t-\tau)}{1-\eta}}\Big)\norm{M}^{-\frac{1}{2}}$, $\alpha=\textnormal{tr}({\Upsilon}^T M\Upsilon)$, and $\norm{M}$ denotes the largest singular value of the matrix $M$.                      	                                                      
	\end{theorem}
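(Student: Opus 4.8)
The plan is to reduce the claim about robustness degrees to a bound on the Euclidean distance between two trajectories, and then to control that distance by splitting it into a Kalman filtering error contribution and a stochastic divergence contribution, each handled by one probabilistic ingredient already available. First I would use the well-known Lipschitz property of the MTL robustness degree: for any trajectories $\xi,\xi'$ and any evaluation time, $\bigl|\left[\left[\varphi\right]\right](\xi,t)-\left[\left[\varphi\right]\right](\xi',t)\bigr|$ is bounded by the supremum of $\norm{x(t')-x'(t')}$ over the bounded forward time window on which $\left[\left[\varphi\right]\right](\cdot,t)$ depends. This follows by structural induction on $\varphi$ (cf.\ \cite{FAINEKOScontinous}): the base case is the $1$-Lipschitzness of the signed distance $\textbf{Dist}_d(\cdot,\mathcal{O}(\pi))$, and $\lnot$, $\vee$, and $\mathcal{U}_{\mathcal{I}}$ recombine subformula robustness values only through negation and pointwise $\max/\min$, none of which increases the Lipschitz constant. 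Inserting the intermediate trajectory $\xi^{\ast}_{\bm\cdot;x_{\tau},u}$ --- the nominal trajectory of \eqref{nomlinear} started from the \emph{true} state $x_{\tau}$ with the same input $u$ --- and using the triangle inequality gives
\begin{align*}
&\bigl|\left[\left[\varphi\right]\right](\xi_{\bm\cdot;x_{\tau},u},t)-\left[\left[\varphi\right]\right](\xi^{\ast}_{\bm\cdot;\hat{x}_{\tau},u},t)\bigr|\\
&\quad\le\sup_{\tau\le t'\le t}\norm{\xi_{t';x_{\tau},u}-\xi^{\ast}_{t';x_{\tau},u}}\\
&\qquad+\sup_{\tau\le t'\le t}\norm{\xi^{\ast}_{t';x_{\tau},u}-\xi^{\ast}_{t';\hat{x}_{\tau},u}},
\end{align*}
so it suffices to bound the two suprema simultaneously on an event of probability exceeding $\gamma\eta$.

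Next I would bound each supremum. For the second, both trajectories solve the deterministic linear system \eqref{nomlinear} with the same input, so their difference obeys the homogeneous equation driven by $A$; since $\psi=\phi$ is a control bisimulation function of \eqref{nomlinear} (equivalently $A^{T}M+MA\preceq0$, \eqref{LMI2}), the map $t'\mapsto\phi(\xi^{\ast}_{t';x_{\tau},u},\xi^{\ast}_{t';\hat{x}_{\tau},u})$ is nonincreasing and therefore never exceeds $\phi(x_{\tau},\hat{x}_{\tau})=(x_{\tau}-\hat{x}_{\tau})^{T}M(x_{\tau}-\hat{x}_{\tau})$; by hypothesis \eqref{end} this quantity is $<\beta$ with probability $>\gamma$, and we call this event $E_{2}$. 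For the first supremum, $\xi_{\bm\cdot;x_{\tau},u}$ and $\xi^{\ast}_{\bm\cdot;x_{\tau},u}$ are the stochastic system \eqref{syslinear_aggregate} and its nominal version started from the \emph{same} state, so \eqref{prob}, applied to the network-level system over the horizon $[\tau,t]$ with $\alpha=\tr(\Upsilon^{T}M\Upsilon)$, gives $\sup_{\tau\le t'\le t}\phi(\xi^{\ast}_{t';x_{\tau},u},\xi_{t';x_{\tau},u})<\frac{\alpha(t-\tau)}{1-\eta}$ with probability $>\eta$, and we call this event $E_{1}$. Translating each quadratic-form bound $v^{T}Mv<c$ into a Euclidean bound (this is where the factor $\norm{M}^{-1/2}$ in $\hat\beta$ enters) and adding the two contributions shows that, on $E_{1}\cap E_{2}$, the right-hand side of the displayed inequality is strictly below $\hat\beta=\bigl(\sqrt{\beta}+\sqrt{\tfrac{\alpha(t-\tau)}{1-\eta}}\bigr)\norm{M}^{-1/2}$.

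It remains to show $\mathbb{P}(E_{1}\cap E_{2})>\gamma\eta$. The event $E_{2}$ is measurable with respect to the information $\mathcal{F}_{\tau}$ available at time $\tau$, since both $x_{\tau}$ and the \emph{a posteriori} estimate $\hat{x}_{\tau}$ are. Conditioning on $\mathcal{F}_{\tau}$ fixes the initial condition $x_{\tau}$ and the (estimate-dependent) input $u$, and by the Markov property of \eqref{syslinear_aggregate} the only residual randomness in $E_{1}$ is the Brownian increment over $(\tau,t]$, which is independent of $\mathcal{F}_{\tau}$; since the bound in \eqref{prob} holds uniformly over initial conditions and inputs, $\mathbb{P}(E_{1}\mid\mathcal{F}_{\tau})>\eta$ almost surely. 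Hence $\mathbb{P}(E_{1}\cap E_{2})=\mathbb{E}\bigl[\mathbf{1}_{E_{2}}\,\mathbb{P}(E_{1}\mid\mathcal{F}_{\tau})\bigr]>\eta\,\mathbb{P}(E_{2})>\eta\gamma$, and on $E_{1}\cap E_{2}$ the displayed triangle inequality together with the Lipschitz property gives $\bigl|\left[\left[\varphi\right]\right](\xi_{\bm\cdot;x_{\tau},u},t)-\left[\left[\varphi\right]\right](\xi^{\ast}_{\bm\cdot;\hat{x}_{\tau},u},t)\bigr|<\hat\beta$, as claimed.

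The main obstacle is this last, probabilistic step: obtaining the constant $\gamma\eta$ rather than the weaker (and possibly vacuous) Fr\'echet bound $\gamma+\eta-1$ requires the conditioning argument above, which in turn relies on \eqref{prob} being applicable to a trajectory launched from the \emph{random} state $x_{\tau}$ under a \emph{random}, filter-dependent input --- hence the importance of the uniformity of \eqref{prob} over $x_{0}^{i}$ and $u^{i}(\cdot)$. A secondary point to verify is that the forward horizon on which $\left[\left[\varphi\right]\right](\cdot,t)$ depends is contained in the interval $[\tau,t]$ over which \eqref{prob} supplies the divergence bound (otherwise that horizon, and $\hat\beta$, must be enlarged accordingly); the Lipschitz induction itself is routine.
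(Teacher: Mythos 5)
Your proposal is correct and follows essentially the same route as the paper: the same triangle-inequality decomposition through the intermediate nominal trajectory $\xi^{\ast}_{\bm\cdot;x_{\tau},u}$ launched from the true state, the same two probabilistic ingredients (the Kalman-error hypothesis \eqref{end} propagated by the nonincreasing bisimulation function for the deterministic pair, and Proposition 1 for the stochastic--nominal pair), and the same product bound $\gamma\eta$. The only differences are presentational --- you compress the paper's four-case analysis of the signed distance and the subsequent structural induction into the standard Lipschitz property of the robustness degree, and you justify $\mathbb{P}(E_1\cap E_2)>\gamma\eta$ by conditioning on $\mathcal{F}_\tau$ where the paper simply asserts that the two events are independent --- so on that last point your argument is, if anything, the more careful one.
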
 

%\textcolor{blue}{tell the reader what Equation 33 is}
%
%\textcolor{blue}{state what this Theorem is actually saying}

%\begin{proof} 
%	See Appendix.
%\end{proof} 

\begin{algorithm}[tp]
	\caption{Differentially private controller synthesis with MTL specifications.}                                                                   
	\label{MTLalg}
	\begin{algorithmic}[1]
		\State $\ell\gets0$, initialize the states and inputs
		\While{$\ell<T_{\textrm{max}}$}
		\If{$\ell=jT$ for some $j=0,1,\dots$} 
		\State Obtain $\hat{x}^i[\ell]$ from the Kalman filter in local hub
		\indent \quad $i$ and update the constraints (\ref{constraint_s})-(\ref{constraint_e})  \label{update}
		\State $\varphi\gets [\varphi]^{\ell}_{0}$ \label{update_MTL}
		\State Re-solve (\ref{obj})-(\ref{constraint_e}) to obtain the optimal inputs \indent \quad$u^{\ast i}[\ell+q]~(i=1,2,\dots, Q, q=0,1,\dots, T-1)$	\label{recompute}			
		\State $\hat{u}^{\ast i}[\ell+q]\gets u^{\ast i}[\ell+q]~(i=1,2,\dots, Q$,\\ \indent \quad\quad\quad $q=0,1,\dots, T-1)$         \label{replace}    
		\EndIf                  
		\EndWhile                    
		\State Return \(\hat{u}^{\ast}\)
	\end{algorithmic} 
\end{algorithm}
\end{subsection}

 Under the conditions of Theorem \ref{gamma}, if $\left[\left[\varphi\right]\right](\xi^{\ast}_{\bm\cdot; \hat{x}_{0}, u}, 0)\ge\hat{\beta}$, then $\mathbb{P}\left\lbrace   \left[\left[\varphi\right]\right](\xi_{\bm\cdot; x_{0}, u}, 0) >0\right\rbrace  >\gamma\eta$. Therefore, the cloud can synthesize the control inputs for the \textit{nominal} deterministic control system such that the trajectories of the \textit{nominal} deterministic control system satisfy the MTL specification with certain robustness margins. Then all the synthesized control input methods for the deterministic system can be applied to the
 stochastic control system with probability of at least $\gamma\eta$ for satisfying the MTL specification. 

We use $[\varphi]^{\ell}_{k}$ to denote the formula modified from the MTL formula $\varphi$ when $\varphi$ is evaluated at time index $k$ and the current time index is $\ell$. $[\varphi]^{\ell}_{k}$ can be calculated recursively as follows (we use $\pi_{k}$ to denote the atomic predicate $\pi$ evaluated at time index $k$):
\begin{align}
\begin{split}                     
[\pi]^{\ell}_{k} =&  
\begin{cases}
\pi_{k},& \mbox{if $k>\ell$}\\  	
\top,& \mbox{if $k\le \ell$ and $x[k]\in\mathcal{O}(\pi)$}\\  
\bot,& \mbox{if $k\le \ell$ and $x[k]\not\in\mathcal{O}(\pi)$}                                                 
\end{cases}\\
[\neg\varphi]^{\ell}_{k}  :=&\neg[\varphi]^{\ell}_{k}\\
[\varphi_1\wedge\varphi_2]^{\ell}_{k}:=&[\varphi_1]^{\ell}_{k}\wedge[\varphi_2]^{\ell}_{k}\\
[\varphi_1\mathcal{U}_{\mathcal{I}}\varphi_2]^{\ell}_{k} :=&\bigvee_{k'\in (k+\mathcal{I})}\Big([\varphi_2]^{\ell}_{k'}\wedge\bigwedge_{k\le k''<k'}[\varphi_1]^{\ell}_{k''}\Big).
\end{split}
\label{update_phi}
\end{align}

If the MTL formula $\varphi$ is evaluated at the initial time index (which is the usual case when the task starts at the initial time), then the modified formula is $[\varphi]^{\ell}_{0}$. 

For example, if $\varphi=\Box_{[0,10]}(x\ge5)$, the current time is 5 and $\varphi$ is not violated yet, then $[\varphi]^{5}_{0}=\Box_{[6,10]}(x\ge5)$. 

Algorithm 1 shows the proposed differentially private controller synthesis approach with respect to MTL specifications. 
The controller synthesis problem can be formulated as a sequence of mixed integer linear programming problems:
\begin{align}
\underset{u^i[\ell:\ell+T-1]}{\argmin} ~ & \sum_{i=1}^{N}\norm{u^i[\ell:\ell+T-1]}  \label{obj} \\
\text{subject to:} ~ 
& x^{\ast i}[\ell+k+1]=\bar{A}^ix^{\ast i}[\ell+k]+\bar{B}^iu^i[\ell+k], \nonumber \\ & x^{\ast i}[\ell]= \hat{x}^{i}[\ell], \forall i=1,\dots,N, ~\forall k=0,\dots,T-1,  \label{constraint_s}  \\
& u^i_{\textrm{min}}\le u^i[\ell+k]\le u^i_{\textrm{max}}, \forall i=1,2,\dots,N, \nonumber \\ & ~~~~~~~~~~~~~~\forall k=0,\dots,T-1,\\
& \left[\left[ \varphi\right]\right](\xi^{\ast}_{\bm\cdot; \hat{x}_{0}, u}, 0)\ge\hat{\beta},  \label{constraint_e}   
\end{align} 
where the time index $\ell$ is initially set as 0, \(u^i[\ell:\ell+T-1] = \{u^i[\ell], \cdots, u^i[\ell+T-1]\}\) is the control input signal for agent $i$ and the input values are constrained to $[u^i_{\textrm{min}}, u^i_{\textrm{max}}]$, $\xi^{\ast}_{\bm\cdot; \hat{x}_{0}, u}$ is the nominal trajectory of the aggregated state starting from $\hat{x}_{0}$ with input $u=[(u^{1})^T,\dots,(u^{N})^T]^T$.

At each time index $\ell=jT$ ($j=0,1,\dots$), the local hubs send the state estimates to the cloud, and we modify the MTL formula as in (\ref{update_phi}) (Line \ref{update_MTL}). The MILP is solved for time $\ell=jT$ with the updated state values and the modified MTL formula $[\varphi]^{\ell}_{0}$ (Line \ref{recompute}). The previously computed control inputs are replaced by the newly computed control inputs from time index $\ell$ to $\ell+T-1$ (Line \ref{replace}). The same procedure repeats until a set maximal time $T_{\textrm{max}}$ is reached.

\section{Implementation}
In this section, we implement our differentially private controller synthesis approach on the example in Fig. \ref{fig_DF} (in Section \ref{sec:intro}). The nominal deterministic dynamics of the $i$th ($i=1,2$) Baxter-On-Wheels robot can be expressed as \cite{zhe_advisory}
\begin{align}
\begin{aligned}
\dot{\varrho}^{i} &= \frac{v_{\textrm{r}}^{i} + v_{\textrm{l}}^{i}}{2} \cos(\theta^{i}) = v^{i} \cos(\theta^{i}), \\
\dot{\kappa}^{i} &= \frac{v_{\textrm{r}}^{i} + v_{\textrm{l}}^{i}}{2} \sin(\theta^{i}) = v^{i} \sin(\theta^{i}),\\
\dot{\theta}_{i} &= \frac{v_{\textrm{r}}^{i} - v_{\textrm{l}}^{i}}{2d} = \omega^{i}, 
\label{eq:sys-dyn-unicycle}
\end{aligned}
\end{align}                                                                                           
where $\varrho^i$, $\kappa^i$ and $\theta^{i}$ denote the $x$-position, $y$-position and the orientation of the wheelchair base, $v_{\rm{r}}^{i}$ and $v_{\rm{l}}^{i}$ are the
wheel speeds of the right and left wheels of
the $i$th robot, respectively, \(v^{i}\) and \(\omega^{i}\) are the linear and angular velocities of
the $i$th robot, respectively, and \(d^{i}\) is the distance of any one wheel from the center of the robot base of
the $i$th robot. We feedback linearize the system as follows:

\begin{align}
\begin{aligned}
\begin{bmatrix}
\ddot{\varrho}^{i} \\ \ddot{\kappa}^{i}
\end{bmatrix} =
\begin{bmatrix}
\cos(\theta^{i}) & -\sin(\theta^{i}) \\ \sin(\theta^{i}) &
\cos(\theta^{i})
\end{bmatrix}
\begin{bmatrix}
\dot{v}^{i} \\ v^{i} \omega^{i}
\end{bmatrix}.
\end{aligned}
\label{equ:feedlinear}
\end{align}
We choose the intermediate control inputs to the robot to be \(\dot{v}^{i}\) and \(v^{i}\omega^{i}\) such that
\begin{align}
\begin{bmatrix}
\dot{v}^{i} \\ v^{i} \omega^{i}
\end{bmatrix} = \begin{bmatrix} \cos(\theta_{i}) & \sin(\theta_{i}) \\
-\sin(\theta_{i}) & \cos(\theta_{i})
\end{bmatrix}
\begin{bmatrix}
u^{i}_{1} \\ u^{i}_{2}
\end{bmatrix},
\label{eq:non-linear-ctrl}
\end{align} 
where \(u^{i}_{1}\) and \(u^{i}_{2}\) are the new control
inputs to be determined.

With (\ref{equ:feedlinear}), (\ref{eq:non-linear-ctrl}) and adding the process noise, we have the following stochastic control system with linear dynamics:
\begin{align}
\begin{bmatrix}
d^2\varrho^{i}/dt^2 \\ d^2\kappa^{i}/dt^2
\end{bmatrix} =
\begin{bmatrix}
u^{i}_{1} \\ u^{i}_{2}
\end{bmatrix}+\begin{bmatrix}
b^{i}_{1} \\ b^{i}_{2}
\end{bmatrix}dw/dt,
\end{align}
where $b^{i}_{1}=b^{i}_{2}=0.01$ ($i=1,2$).

We denote the aggregated state by $x=[\varrho^{1} ~\kappa^{1} ~\dot{\varrho}^{1}~ \dot{\kappa}^{1}~ \varrho^{2} ~$ $\kappa^{2} ~\dot{\varrho}^{2}~ \dot{\kappa}^{2}]^T$ and the output $y=x$.  We add privacy noise to $y$ using the Gaussian mechanism with the privacy parameters $\epsilon\in[\log(6), \log(10)]$ and $\delta\in[0.1, 0.4]$. We choose $M$ as the identity matrix and design $u=Kx+\zeta$, where 
\begin{align}\nonumber
\begin{split}                                            
&K=
\begin{bmatrix}
-1&0&0&0&  0&0&0&0\\
0&-1&0&0&  0&0&0&0\\
0&0&0&0 &  -1&0&0&0\\
0&0&0&0&   0&-1&0&0
\end{bmatrix},
\end{split}
\end{align}	
and $\zeta(\bm\cdot)$ is the new input signal.

We use the following MTL specification:                                                                      
\[
\begin{split}
\varphi=&\big(\Box_{[0,20]}\Diamond_{[0,10]}(x^1\in Green_1)\big) \\&\wedge \big(\Box_{[0,20]}\Diamond_{[0,10]}(x^1\in Green_4)\big) \\&\wedge \big(\Box_{[0,20]}\Diamond_{[0,10]}(x^2\in Green_2)\big) \\&\wedge\big(\Box_{[0,20]}\Diamond_{[0,10]}(x^2\in Green_3)\big)\\  
&\wedge \big(\Box_{[0,20]}\lnot Collide (x^1, x^2)\big),
\end{split}
\]
where $x^1=[\varrho^{1} ~\kappa^{1}]^T$ and $x^2=[\varrho^{2} ~\kappa^{2}]^T$ are the positions of the two agents, respectively. The regions $Green_1$, $Green_2$, $Green_3$ and $Green_4$ are square regions with the side length of 10 centered at $[0~0]^T$, $[100~0]^T$, $[0~100]^T$ and $[100~100]^T$, respectively.

%$\varphi$ means ``Robot 1 should reach both $Green_1$ and $Green_4$ at least once in every 10 consecutive time units, Robot 2 should reach both $Green_2$ and $Green_3$ at least once in every 10 consecutive time units, while always avoiding collision of the two robots''. 

The initial positions of the two agents are $x^1[0]=[0~0]^T$ and $x^2[0]=[100~0]^T$, respectively. The initial velocities are $v^1[0]=\omega^1[0]=0$ and $v^2[0]=\omega^2[0]=0$, respectively. We set $u^i_{\textrm{max}}=50$, $u^i_{\textrm{min}}=-50$ for $i=1,2$, and we set $T=10$ and $T_{\textrm{max}}=20$. We compute the control inputs for the two robots such that the MTL specification $\varphi$ is satisfied with probability at least $90\%$ (we choose $\gamma=\eta=95\%$) with minimal control efforts. Fig. \ref{plot1} shows the obtained optimal input signals $u^1$ and $u^2$.

\begin{figure}
	\centering
	\includegraphics[scale=0.2]{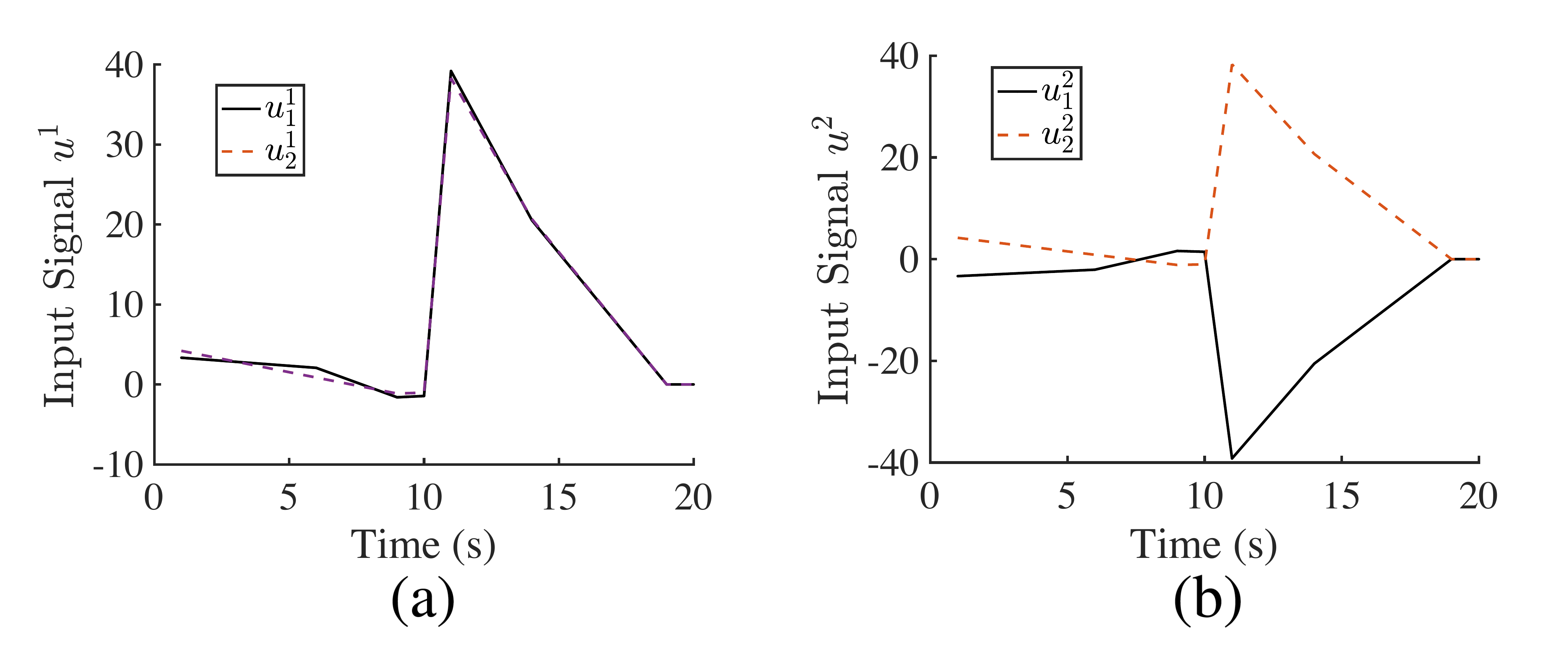}
	\caption{Obtained optimal input signals for MTL specification $\varphi$.}  
	\label{plot1}
\end{figure}

\section{Conclusion}
We presented a provably correct method for differentially private controller synthesis with respect to metric temporal logic (MTL) specifications. For future work, we will
extend the implementations to more complicated MTL specifications
and experiments on a hardware testbed.

\section*{APPENDIX}
\textbf{Proof of Theorem \ref{gamma}}:\\
To prove Theorem \ref{gamma}, we first prove that Theorem \ref{gamma} holds for any atomic proposition $\pi$.   

As the metric $d$ satisfies the triangle inequality, for any given $\tau\ge0$, we have $\forall y\in\mathcal{X}, t\ge\tau$,
\begin{align}
\begin{split}
& \vert d(\xi_{t-\tau,x_{\tau},u},y)-d(\xi^{\ast}_{t-\tau,\hat{x}_{\tau},u},y)\vert \\& \le\vert d(\xi_{t-\tau,x_{\tau},u},y)-d(\xi^{\ast}_{t-\tau,x_{\tau},u},y)\vert\\&~~+\vert d(\xi^{\ast}_{t-\tau,x_{\tau},u},y)-d(\xi^{\ast}_{t-\tau,\hat{x}_{\tau},u},y)\vert
\\&\le d(\xi_{t-\tau,x_{\tau},u},\xi^{\ast}_{t-\tau,x_{\tau},u})+d(\xi^{\ast}_{t-\tau,x_{\tau},u},\xi^{\ast}_{t-\tau,\hat{x}_{\tau},u}).
\end{split}          
\label{tri}                                                               
\end{align}                                                                    

As $\mathbb{P}[\big(\xi^{\ast}_{t-\tau,\hat{x}_{\tau},u}-\xi^{\ast}_{t-\tau,x_{\tau},u}\big)^TM\big(\xi^{\ast}_{t-\tau,\hat{x}_{\tau},u}-\xi^{\ast}_{t-\tau,x_{\tau},u}\big)<$ $\beta]>\gamma$, we have
\begin{align}
\begin{split}
&\mathbb{P}[\underbrace{d(\xi^{\ast}_{t-\tau,\hat{x}_{\tau},u},\xi^{\ast}_{t-\tau,x_{\tau},u})<\sqrt{\beta}\norm{M}^{-\frac{1}{2}}}_{\mathcal{A}}]> \gamma.
\end{split}      
\label{dist}                         
\end{align} 
On the other hand, as 
\begin{align}
\begin{split}
&\mathbb{P}[\big(\xi_{t-\tau,x_{\tau},u}-\xi^{\ast}_{t-\tau,x_{\tau},u}\big)^TM\big(\xi_{t-\tau,x_{\tau},u}-\xi^{\ast}_{t-\tau,x_{\tau},u}\big)\\&<\frac{\alpha(t-\tau)}{1-\eta}] > \eta.
\end{split}
\label{prob1}                                        
\end{align} 
Thus, we have
\begin{align}
\begin{split}
\mathbb{P}&[\underbrace{d(\xi_{t-\tau,x_{\tau},u},\xi^{\ast}_{t-\tau,x_{\tau},u})
	<\sqrt{\frac{\alpha(t-\tau)}{1-\eta}}\norm{M}^{-\frac{1}{2}}}_{\mathcal{B}}]>\eta.
\end{split}
\label{prob2}                                      
\end{align} 

From (\ref{dist}) and (\ref{prob2}), as event $\mathcal{A}$ and event $\mathcal{B}$ are independent, we have 
\begin{align}
\begin{split}
\mathbb{P}[&d(\xi^{\ast}_{t-\tau,\hat{x}_{\tau},u},\xi^{\ast}_{t-\tau,x_{\tau},u})<\sqrt{\beta}\norm{M}^{-\frac{1}{2}},\\
&d(\xi_{t-\tau,x_{\tau},u},\xi^{\ast}_{t-\tau,x_{\tau},u})
<\sqrt{\frac{\alpha(t-\tau)}{1-\eta}}\norm{M}^{-\frac{1}{2}} ] > \gamma\eta.                           
\end{split}
\label{and_prob}                                        
\end{align} 

Therefore, from (\ref{and_prob}) and (\ref{tri}), we have
\begin{align}
\begin{split}
\mathbb{P}[ \vert d(\xi_{t-\tau,x_{\tau},u},y)-d(\xi^{\ast}_{t-\tau,\hat{x}_{\tau},u},y)\vert<\hat{\beta}(t) ] > \gamma\eta,                           
\end{split}
\label{conclude_prob}                                        
\end{align} 
where $\hat{\beta}(t)\triangleq\Big(\sqrt{\beta}+\sqrt{\frac{\alpha(t-\tau)}{1-\eta}}\Big)\norm{M}^{-\frac{1}{2}}$.

In the following, we denote $B(\xi^{\ast}_{t-\tau,\hat{x}_{\tau},u},\hat{\beta}(t))\triangleq\{\xi_{t-\tau,x_{\tau},u},y)~\vert~\vert d(\xi_{t-\tau,x_{\tau},u},y)-d(\xi^{\ast}_{t-\tau,\hat{x}_{\tau},u},y)\vert<\hat{\beta}(t)\}$.

1) $\xi^{\ast}_{t-\tau,\hat{x}_{\tau},u}\in \mathcal{O}(\pi)$, and $B(\xi^{\ast}_{t-\tau,\hat{x}_{\tau},u},\hat{\beta}(t))\subset\mathcal{O}(\pi)$, as shown in Fig. \ref{proof} (a). In this case, for any $\xi_{t-\tau,x_{\tau},u}\in B(\xi^{\ast}_{t-\tau,\hat{x}_{\tau},u},\hat{\beta}(t))$, 
\begin{align}\nonumber
&\left[\left[\pi\right]\right](\xi_{\bm\cdot,x_{\tau},u}, t)=\mbox{inf}\{d(\xi_{t-\tau,x_{\tau},u},y)|y \in \mathcal{X}\backslash\mathcal{O}(\pi)\}.                                     
\end{align}
From (\ref{conclude_prob}), it holds with probability at least $\gamma\eta$ that
\begin{align}\nonumber
&\left[\left[\pi\right]\right](\xi_{\bm\cdot,x_{\tau},u}, t)\ge\mbox{inf}\{d(\xi^{\ast}_{t-\tau,\hat{x}_{\tau},u},y)-\hat{\beta}(t)|y \in \mathcal{X}\backslash\mathcal{O}(\pi)\}\\\nonumber
& =\mbox{inf}\{d(\xi^{\ast}_{t-\tau,\hat{x}_{\tau},u},y)|y \in \mathcal{X}\backslash\mathcal{O}(\pi)\}-\hat{\beta}(t)  \\\nonumber  & =\left[\left[\pi\right]\right](\xi^{\ast}_{\bm\cdot,\hat{x}_{\tau},u}, t)-\hat{\beta}(t).                               
\end{align}			

2) $\xi^{\ast}_{t-\tau,\hat{x}_{\tau},u}\notin \mathcal{O}(\pi)$, and $B(\xi^{\ast}_{t-\tau,\hat{x}_{\tau},u},\hat{\beta}(t))\subset\mathcal{X}\backslash\mathcal{O}(\pi)$, as shown in Fig. \ref{proof} (b). In this case, for any $\xi_{t-\tau,x_{\tau},u}\in B(\xi^{\ast}_{t-\tau,\hat{x}_{\tau},u},\hat{\beta}(t))$,
\begin{align}\nonumber
&\left[\left[\pi\right]\right](\xi_{\bm\cdot,x_{\tau},u}, t)=-\mbox{inf}\{d(\xi_{t-\tau,x_{\tau},u},y)|y \in cl(\mathcal{O}(\pi))\}.                                     
\end{align}
From (\ref{conclude_prob}), it holds with probability at least $\gamma\eta$ that
\begin{align}\nonumber
&\left[\left[\pi\right]\right](\xi_{\bm\cdot,x_{\tau},u}, t)\ge-\mbox{inf}\{d(\xi^{\ast}_{t-\tau,\hat{x}_{\tau},u},y)+\hat{\beta}(t)|y \in cl(\mathcal{O}(\pi))\}\\\nonumber
& =\left[\left[\pi\right]\right](\xi^{\ast}_{\bm\cdot,\hat{x}_{\tau},u}, t)-\hat{\beta}(t).                                    
\end{align}		

\begin{figure}
	\centering
	\includegraphics[scale=0.23]{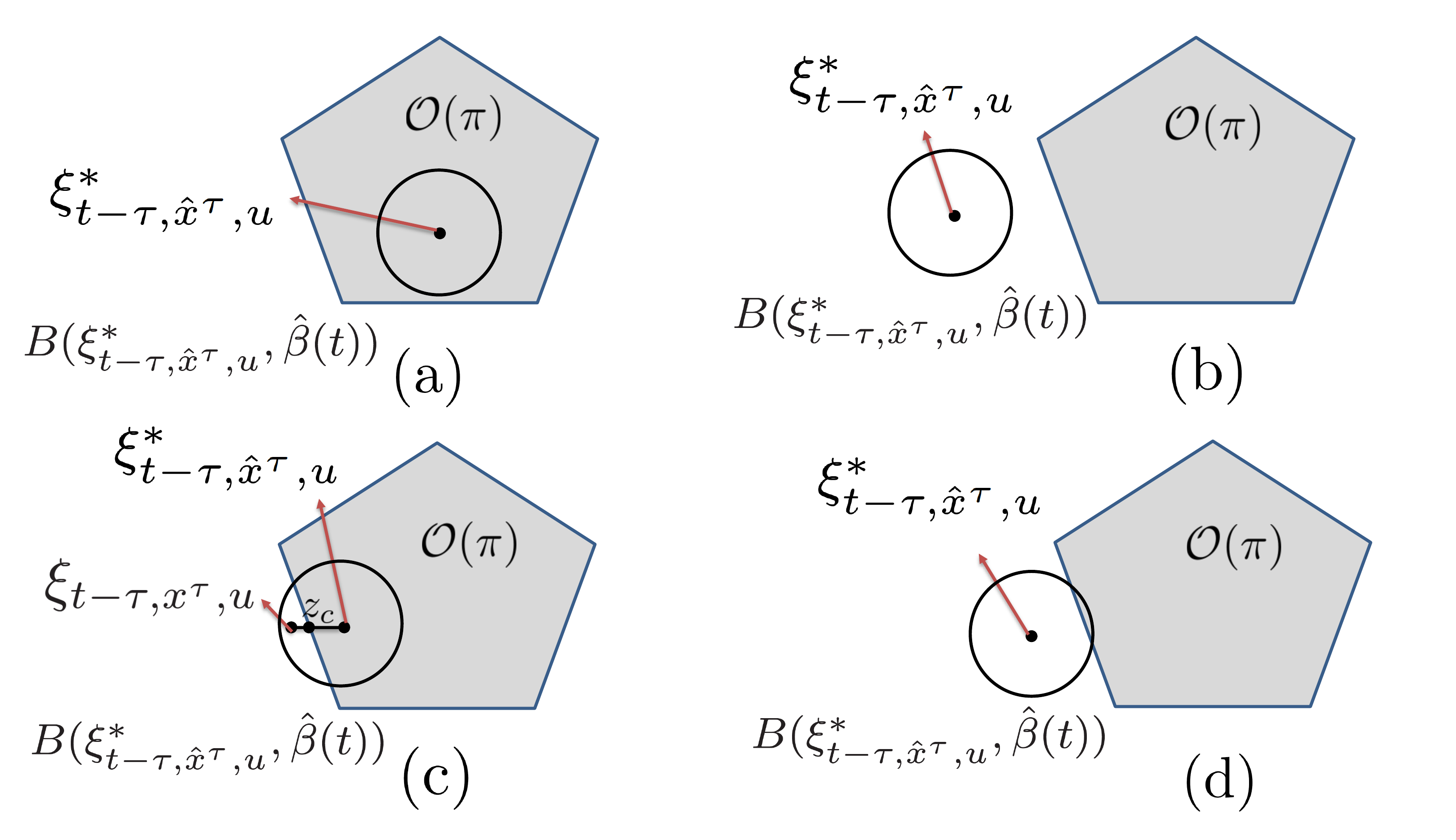}
	\caption{Four different cases in the proof.} 
	\label{proof}
\end{figure}	

3) $\xi^{\ast}_{t-\tau,\hat{x}_{\tau},u}\in \mathcal{O}(\pi)$, but $B(\xi^{\ast}_{t-\tau,\hat{x}_{\tau},u},\hat{\beta}(t))\not\subset\mathcal{O}(\pi)$, as shown in Fig. \ref{proof} (c). In this case,  it holds with probability at least $\gamma\eta$ that
\begin{align} \nonumber
\begin{split}
&\left[\left[\pi\right]\right](\xi_{\bm\cdot,x_{\tau},u}, t)\ge\min\limits_{\xi_{t-\tau,x_{\tau},u}\in B(\xi^{\ast}_{t-\tau,\hat{x}_{\tau},u},\hat{\beta}(t))}\left[\left[\pi\right]\right](\xi_{\bm\cdot,x_{\tau},u}, t)\\
&~~~~~~~~~~~~~~~~~~~~~~=\min\{X_1, X_2\}, \\
& \mbox{where}\\
&X_1=\\&-\max_{\substack{\xi_{t-\tau,x_{\tau},u}\in B(\xi^{\ast}_{t-\tau,\hat{x}_{\tau},u},\hat{\beta}(t)),\\\xi_{t-\tau,x_{\tau},u}\notin \mathcal{O}(\pi)}}\mbox{inf}\{d(\xi_{t-\tau,x_{\tau},u},y)|y \in cl(\mathcal{O}(\pi))\},\\
&X_2=\\&\min_{\substack{\xi_{t-\tau,x_{\tau},u}\in B(\xi^{\ast}_{t-\tau,\hat{x}_{\tau},u},\hat{\beta}(t)),\\\xi_{t-\tau,x_{\tau},u}\in \mathcal{O}(\pi)}}\mbox{inf}\{d(\xi_{t-\tau,x_{\tau},u},y)|y \in \mathcal{X}\backslash\mathcal{O}(\pi)\}.
\end{split}
\end{align}      

As $d(\xi_{t-\tau,x_{\tau},u},y)\ge0$, so $X_1\le0, X_2\ge0$, $\min\{X_1, X_2\}=X_1$. For any $\xi_{t-\tau,x_{\tau},u}\in B(\xi^{\ast}_{t-\tau,\hat{x}_{\tau},u},\hat{\beta}(t))$ and $\xi_{t-\tau,x_{\tau},u}\notin \mathcal{O}(\pi)$, there exists $z_c\in B(\xi^{\ast}_{t-\tau,\hat{x}_{\tau},u},\hat{\beta}(t))$ and $z_c\in \partial(\mathcal{O}(\pi))$ such that $\xi_{t-\tau,x_{\tau},u}, z_c$ and $\xi^{\ast}_{t-\tau,\hat{x}_{\tau},u}$ are collinear, i.e.
\begin{align}\nonumber
&d(\xi^{\ast}_{t-\tau,\hat{x}_{\tau},u},z_c)+d(z_c,\xi_{t-\tau,x_{\tau},u})\\ &=d(\xi^{\ast}_{t-\tau,\hat{x}_{\tau},u},\xi_{t-\tau,x_{\tau},u})\le\hat{\beta}(t).
\end{align}		
Therefore, as 
$\left[\left[\pi\right]\right](\xi^{\ast}_{\bm\cdot,\hat{x}_{\tau},u}, t)=\mbox{inf}\{d(\xi^{\ast}_{t-\tau,\hat{x}_{\tau},u},y)|y \in \mathcal{X}\backslash\mathcal{O}(\pi)\}\le d(\xi^{\ast}_{t-\tau,\hat{x}_{\tau},u},z_c)$ and $\mbox{inf}\{d(\xi_{t-\tau,x_{\tau},u},y)|y \in cl(\mathcal{O}(\pi))\}\le d(\xi_{t-\tau,x_{\tau},u},z_c)$, we have for any $x\in B(\hat{x}_{\tau},\hat{\beta}(t))$ and $\xi_{t-\tau,x_{\tau},u}\notin \mathcal{O}(\pi)$,
\begin{align}\nonumber
\mbox{inf}\{d(\xi_{t-\tau,x_{\tau},u},y)|y \in cl(\mathcal{O}(\pi))\}+\left[\left[\pi\right]\right](\xi^{\ast}_{\bm\cdot,\hat{x}_{\tau},u}, t)\le\hat{\beta}(t).                                     
\end{align}

So $-X_1+\left[\left[\pi\right]\right](\xi^{\ast}_{\bm\cdot,\hat{x}_{\tau},u}, t)\le\hat{\beta}(t)$, i.e. $X_1\ge\left[\left[\pi\right]\right](\xi^{\ast}_{\bm\cdot,\hat{x}_{\tau},u}, t)-\hat{\beta}(t)$. Therefore, it holds with probability at least $\gamma\eta$ that $\left[\left[\pi\right]\right](\xi_{\bm\cdot,x_{\tau},u}, t)\ge\min\{X_1, X_2\}=X_1\ge\left[\left[\pi\right]\right](\xi^{\ast}_{\bm\cdot,\hat{x}_{\tau},u}, t)-\hat{\beta}(t)$.

4) $\xi^{\ast}_{t-\tau,\hat{x}_{\tau},u}\notin \mathcal{O}(\pi)$, but  $B(\xi^{\ast}_{t-\tau,\hat{x}_{\tau},u},\hat{\beta}(t))\not\subset\mathcal{X}\backslash\mathcal{O}(\pi)$, as shown in Fig. \ref{proof} (d). In this case, we have
\begin{align}\nonumber
&\left[\left[\pi\right]\right](\xi^{\ast}_{\bm\cdot,\hat{x}_{\tau},u}, t)=-\mbox{inf}\{d(\xi^{\ast}_{t-\tau,\hat{x}_{\tau},u},y)|y \in cl(\mathcal{O}(\pi))\}.                                     
\end{align}		  
For any $\xi_{t-\tau,x_{\tau},u}\in B(\xi^{\ast}_{t-\tau,\hat{x}_{\tau},u},\hat{\beta}(t))$ and $\xi_{t-\tau,x_{\tau},u}\notin\mathcal{O}(\pi)$, it holds with probability at least $\gamma\eta$ that
\begin{align}\nonumber
&\left[\left[\pi\right]\right](\xi_{\bm\cdot,x_{\tau},u}, t)=-\mbox{inf}\{d(\xi_{t-\tau,x_{\tau},u},y)|y \in cl(\mathcal{O}(\pi))\}\\\nonumber
& \ge-\mbox{inf}\{d(\xi^{\ast}_{t-\tau,\hat{x}_{\tau},u},y)+\hat{\beta}(t)|y \in cl(\mathcal{O}(\pi))\}  \\\nonumber
& =\left[\left[\pi\right]\right](\xi^{\ast}_{\bm\cdot,\hat{x}_{\tau},u}, t)-\hat{\beta}(t).                                 
\end{align}			
Therefore, it holds with probability at least $\gamma\eta$ that $\left[\left[\pi\right]\right](\xi_{\bm\cdot,x_{\tau},u}, t)\ge\min\{X_1, X_2\}=X_1\ge\left[\left[\pi\right]\right](\xi^{\ast}_{\bm\cdot,\hat{x}_{\tau},u}, t)-\hat{\beta}(t)$.

In sum, we have proven that $\left[\left[\pi\right]\right](\xi_{\bm\cdot,x_{\tau},u}, t)\ge\left[\left[\pi\right]\right](\xi^{\ast}_{\bm\cdot,\hat{x}_{\tau},u}, t)-\hat{\beta}(t)$ holds with probability at least $\gamma\eta$. Similarly, we can prove that $\left[\left[\pi\right]\right](\xi_{\bm\cdot,x_{\tau},u}, t)\le \left[\left[\pi\right]\right](\xi^{\ast}_{\bm\cdot,\hat{x}_{\tau},u}, t)+\hat{\beta}(t)$ holds with probability at least $\gamma\eta$.

Therefore, Theorem \ref{gamma} holds for any atomic proposition $\pi$. Next, we use induction to prove that Theorem \ref{gamma} holds for any MTL formula $\varphi$.

(ii) We assume that Theorem \ref{gamma} holds for $\varphi$ and prove Theorem \ref{gamma} holds for $\lnot\varphi$.  

If Theorem \ref{gamma} holds for $\varphi$, then as $\left[\left[\lnot\varphi\right]\right](\xi^{\ast}_{\bm\cdot,\hat{x}_{\tau},u}, t)=-\left[\left[\varphi\right]\right](\xi^{\ast}_{\bm\cdot,\hat{x}_{\tau},u}, t)$, it holds with probability at least $\gamma\eta$ that $-\left[\left[\lnot\varphi\right]\right](\xi^{\ast}_{\bm\cdot,\hat{x}_{\tau},u}, t)-\hat{\beta}_{\rm{max}}\le -\left[\left[\lnot\varphi\right]\right](\xi_{\bm\cdot,x_{\tau},u}, t)\le -\left[\left[\lnot\varphi\right]\right](\xi^{\ast}_{\bm\cdot,\hat{x}_{\tau},u}, t)+\hat{\beta}_{\rm{max}}$, thus it holds with probability at least $\gamma\eta$ that $\left[\left[\lnot\varphi\right]\right](\xi^{\ast}_{\bm\cdot,\hat{x}_{\tau},u}, t)-\hat{\beta}_{\rm{max}}\le \left[\left[\lnot\varphi\right]\right](\xi_{\bm\cdot,x_{\tau},u}, t)\le \left[\left[\lnot\varphi\right]\right](\xi^{\ast}_{\bm\cdot,\hat{x}_{\tau},u}, t)+\hat{\beta}_{\rm{max}}$.

(iii) We assume that Theorem \ref{gamma} holds for $\varphi_1,\varphi_2$ and prove Theorem \ref{gamma} holds for $\varphi_1\wedge\varphi_2$. 

If Theorem \ref{gamma} holds for $\varphi_1$ and $\varphi_2$, then it holds with probability at least $\gamma\eta$ that $\left[\left[\varphi_1\right]\right](\xi^{\ast}_{\bm\cdot,\hat{x}_{\tau},u}, t)-\hat{\beta}_{\rm{max}}\le \left[\left[\varphi_1\right]\right](\xi_{\bm\cdot,x_{\tau},u}, t)\le \left[\left[\varphi_1\right]\right](\xi^{\ast}_{\bm\cdot,\hat{x}_{\tau},u}, t)+\hat{\beta}_{\rm{max}}$, $\left[\left[\varphi_2\right]\right](\xi^{\ast}_{\bm\cdot,\hat{x}_{\tau},u}, t)-\hat{\beta}_{\rm{max}}\le \left[\left[\varphi_2\right]\right](\xi_{\bm\cdot,x_{\tau},u}, t)\le \left[\left[\varphi_2\right]\right](\xi^{\ast}_{\bm\cdot,\hat{x}_{\tau},u}, t)+\hat{\beta}_{\rm{max}}$. As $\left[\left[\varphi_1\wedge\varphi_2\right]\right](\xi^{\ast}_{\bm\cdot,\hat{x}_{\tau},u}, t)=\min(\left[\left[\varphi_1\right]\right](\xi^{\ast}_{\bm\cdot,\hat{x}_{\tau},u}, t),\left[\left[\varphi_2\right]\right](\xi^{\ast}_{\bm\cdot,\hat{x}_{\tau},u}, t))$, it holds with probability at least $\gamma\eta$ that
\begin{align}\nonumber
\begin{split}
&\min(\left[\left[\varphi_1\right]\right](\xi^{\ast}_{\bm\cdot,\hat{x}_{\tau},u}, t),\left[\left[\varphi_2\right]\right](\xi^{\ast}_{\bm\cdot,\hat{x}_{\tau},u}, t))-\hat{\beta}_{\rm{max}}\\&\le\left[\left[\varphi_1\wedge\varphi_2\right]\right](\xi_{\bm\cdot,x_{\tau},u}, t)\le \min(\left[\left[\varphi_1\right]\right](\xi^{\ast}_{\bm\cdot,\hat{x}_{\tau},u},t),\\
& \left[\left[\varphi_2\right]\right](\xi^{\ast}_{\bm\cdot,\hat{x}_{\tau},u}, t))+\hat{\beta}_{\rm{max}},
\end{split}
\end{align}
therefore it holds with probability at least $\gamma\eta$ that $\left[\left[\varphi_1\wedge\varphi_2\right]\right](\xi^{\ast}_{\bm\cdot,\hat{x}_{\tau},u}, t)-\hat{\beta}_{\rm{max}}\le$ $\left[\left[\varphi_1\wedge\varphi_2\right]\right](\xi_{\bm\cdot,x_{\tau},u}, t)\le \left[\left[\varphi_1\wedge\varphi_2\right]\right](\xi^{\ast}_{\bm\cdot,\hat{x}_{\tau},u}, t)+\hat{\beta}_{\rm{max}}$.                                            

(iv) We assume that Theorem \ref{gamma} holds for $\varphi$ and prove Theorem \ref{gamma} holds for $\varphi_1\mathcal{U}_{\mathcal{I}}\varphi_2$.   

As
\begin{align}\nonumber 
\begin{split}
\left[\left[\varphi_1\mathcal{U}_{\mathcal{I}}\varphi_{2}\right]\right](\xi^{\ast}_{\bm\cdot,\hat{x}_{\tau},u}, t)=&\max\limits_{t'\in (t+\mathcal{I})}\Big(\min\big(\left[\left[\varphi_2\right]\right](\xi^{\ast}_{\bm\cdot,\hat{x}_{\tau},u}, t'), \\& \min\limits_{t\le t''<t'}\left[\left[\varphi_1\right]\right]
(\xi^{\ast}_{\bm\cdot,\hat{x}_{\tau},u},t'')\big)\Big), 
\end{split}
\end{align}
if Theorem \ref{gamma} holds for $\varphi_1$ and $\varphi_2$, then it holds with probability at least $\gamma\eta$ that $\left[\left[\varphi_1\right]\right](\xi^{\ast}_{\bm\cdot,\hat{x}_{\tau},u}, t'')-\hat{\beta}_{\rm{max}}\le \left[\left[\varphi_1\right]\right](\xi_{\bm\cdot,x_{\tau},u}, t'')\le \left[\left[\varphi_1\right]\right](\xi^{\ast}_{\bm\cdot,\hat{x}_{\tau},u}, t'')+\hat{\beta}_{\rm{max}}$, $\left[\left[\varphi_2\right]\right](\xi^{\ast}_{\bm\cdot,\hat{x}_{\tau},u}, t')-\hat{\beta}_{\rm{max}}\le \left[\left[\varphi_2\right]\right](\xi_{\bm\cdot,x_{\tau},u}, t')\le \left[\left[\varphi_2\right]\right](\xi^{\ast}_{\bm\cdot,\hat{x}_{\tau},u}, t')+\hat{\beta}_{\rm{max}}$, so it holds with probability at least $\gamma\eta$ that
\begin{align}\nonumber            
\begin{split}
&\max_{t'\in (t+\mathcal{I})}\Big(\min\big(\left[\left[\varphi_2\right]\right](\xi^{\ast}_{\bm\cdot,\hat{x}_{\tau},u}, t'), \\
&~~~~~~~~~~~~~~\min_{t\le t''<t'}\left[\left[\varphi_1\right]\right]
(\xi^{\ast}_{\bm\cdot,\hat{x}_{\tau},u},t'')\big)\Big)-\hat{\beta}_{\rm{max}}\\
&\le\max_{t'\in (t+\mathcal{I})}\Big(\min\big(\left[\left[\varphi_2\right]\right](\xi_{\bm\cdot,x_{\tau},u}, t'), \min_{t\le t''<t'}\left[\left[\varphi_1\right]\right]
(\xi_{\bm\cdot,x_{\tau},u},t'')\big)\Big)\\
&\le\max_{t'\in (t+\mathcal{I})}\Big(\min\big(\left[\left[\varphi_2\right]\right](\xi^{\ast}_{\bm\cdot,\hat{x}_{\tau},u}, t'), \\
&~~~~~~~~~~~~~~\min_{t\le t''<t'}\left[\left[\varphi_1\right]\right]
(\xi^{\ast}_{\bm\cdot,\hat{x}_{\tau},u},t'')\big)\Big)+\hat{\beta}_{\rm{max}}.   
\end{split}
\end{align}
Thus Theorem \ref{gamma} holds for $\varphi_1\mathcal{U}_{\mathcal{I}}\varphi_{2}$.

Therefore, it is proved by induction that Theorem \ref{gamma} holds for any MTL formula $\varphi$.

\bibliographystyle{IEEEtran}
\bibliography{DFref}

\end{document}